\pdfoutput=1


\documentclass[sigconf]{aamas} 


\usepackage{balance} 
\usepackage{pgfplots}
\newcommand{\axisWidth}{8cm}
\newcommand{\axisHeight}{5.8cm}
\newcommand{\legendFont}{\small}

\newcommand{\Plu}{\textnormal{Plu}}
\newcommand{\TR}{\textnormal{TR}}
\newcommand{\IRV}{\textnormal{IRV}}



\makeatletter
\gdef\@copyrightpermission{
  \begin{minipage}{0.2\columnwidth}
   \href{https://creativecommons.org/licenses/by/4.0/}{\includegraphics[width=0.90\textwidth]{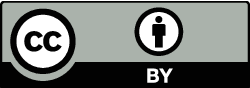}}
  \end{minipage}\hfill
  \begin{minipage}{0.8\columnwidth}
   \href{https://creativecommons.org/licenses/by/4.0/}{This work is licensed under a Creative Commons Attribution International 4.0 License.}
  \end{minipage}
  \vspace{5pt}
}
\makeatother

\setcopyright{ifaamas}
\acmConference[AAMAS '25]{Proc.\@ of the 24th International Conference
on Autonomous Agents and Multiagent Systems (AAMAS 2025)}{May 19 -- 23, 2025}
{Detroit, Michigan, USA}{Y.~Vorobeychik, S.~Das, A.~Nowé  (eds.)}
\copyrightyear{2025}
\acmYear{2025}
\acmDOI{}
\acmPrice{}
\acmISBN{}



\acmSubmissionID{739}


\title[]{Why Instant-Runoff Voting Is \\So Resilient to Coalitional Manipulation: \\Phase Transitions in the Perturbed Culture}


\author{François Durand}
\affiliation{
  \institution{Nokia Bell Labs France}
  \city{Massy}
  \country{France}}
\email{francois.durand@nokia-bell-labs.com}


\begin{abstract}
Previous studies have shown that Instant-Runoff Voting (IRV) is highly resistant to coalitional manipulation (CM), though the theoretical reasons for this remain unclear. To address this gap, we analyze the susceptibility to CM of three major voting rules---Plurality, Two-Round System, and IRV---within the Perturbed Culture model. Our findings reveal that each rule undergoes a phase transition at a critical value \(\theta_c\) of the concentration of preferences: the probability of CM for large electorates converges exponentially fast to 1 below \(\theta_c\) and to 0 above \(\theta_c\). We introduce the Super Condorcet Winner (SCW), showing that its presence is a key factor of IRV's resistance to coalitional manipulation, both theoretically and empirically. Notably, we use this notion to prove that for IRV, \(\theta_c = 0\), making it resistant to CM with even minimal preference concentration.
\end{abstract}


\keywords{Voting; Coalitional manipulation; Instant-Runoff Voting; Perturbed culture; Phase transition; Super Condorcet Winner}


         
\newcommand{\BibTeX}{\rm B\kern-.05em{\sc i\kern-.025em b}\kern-.08em\TeX}


\begin{document}


\pagestyle{fancy}
\fancyhead{}


\maketitle 


\section{Introduction}

\subsection{Motivation}

The Gibbard-Satterthwaite Theorem \cite{gibbard1973manipulation, satterthwaite1975strategyproofness} shows that all non-trivial voting rules are vulnerable to manipulation (strategic voting), even by a single voter. This vulnerability can only worsen when any number of voters with aligned interests can form a coalition to alter the election outcome, a phenomenon called \emph{coalitional manipulation} (CM). Unlike individual manipulation \cite{peleg1979largevotingschemes, slinko2002classical}, CM remains significant in large-scale elections and raises several concerns, notably creating moral dilemmas \cite[Introduction]{durand2015towards} and power imbalances between strategic and naive voters, thus undermining the "one person, one vote" principle \cite{durand2023coalitional,eggers2020votes}.

However, not all voting rules are equally vulnerable: the \emph{CM rate}, i.e., the probability that a voting profile is manipulable by a coalition under a given probabilistic model, can vary significantly between rules. In previous studies, \emph{Instant-Runoff Voting} (IRV) and some of its variants \cite{durand2023coalitional} consistently outperform other classical single-winner voting rules in resisting coalitional manipulation, whether the analysis is based on randomly generated profiles \cite{green2014strategic, green2016statistical} \cite[Chapters 7--8]{durand2015towards} or experimental datasets \cite{chamberlin1984observed, green2016statistical} \cite[Chapter 9]{durand2015towards}. This has been confirmed by theoretical calculations in the case of three candidates \cite{lepelley1994vulnerability, lepelley2003homogeneity}. This is especially intriguing, as IRV has several theoretical features typically deemed undesirable and seemingly prone to manipulation: most notably, IRV is neither \emph{Condorcet-consistent} nor \emph{monotonic} \cite[Definitions 2.8 and 2.10]{brandt2016handbook} \cite{durand2016condorcet}.

In an effort to shed theoretical light on this phenomenon, we will compare IRV to the two other most widely used voting rules in large-scale single-winner political elections: Plurality and the Two-Round System.
We will adopt the \emph{Perturbed Culture} model of random voting profiles, first introduced by \citet{williamson1967social} and later named by \citet[Section 4.3.2]{gehrlein2006condorcet}. 
We will focus on the asymptotic behavior as the number of voters tends to infinity, as it offers more mathematical tractability and is relevant for large-scale elections. We will also examine convergence rates to assess how well this limit approximates scenarios with finite electorates. Our approach is similar to the use of the Ising model in physics, which, despite being unrealistic in its microscopic details, has been remarkably effective in explaining the complex macroscopic phenomenon of phase transitions in ferromagnetism~\cite{kadanoff2000statistical}.

\subsection{Contributions}

In this paper, we prove that each of the three voting rules undergoes a \emph{phase transition}, with an abrupt change in behavior based on whether the concentration parameter \(\theta\) in the Perturbed Culture model exceeds a critical threshold \(\theta_c\). Below \(\theta_c\), the CM rate tends to 1 for large electorates, while above \(\theta_c\), it tends to 0. We compute the critical threshold \(\theta_c\) for each voting rule as a function of the number of candidates.

We show through simulations how the CM rate curve as a function of \(\theta\), which is continuous for a finite number of voters \(n\), converges to a discontinuous curve as \(n\) tends to infinity, thereby explaining the phase transition. Additionally, we investigate the critical regime \(\theta = \theta_c\), leading to the conjecture that in this case, the CM rate tends to a limit strictly between 0 and 1.

We introduce the concept of a Super Condorcet Winner (SCW), which largely explains IRV's resilience to CM. This leads to one of our most striking results: for IRV, the critical value \(\theta_c\) is 0, regardless of the number of candidates. This means that IRV is asymptotically resistant to CM as soon as the Perturbed Culture model shows even the slightest preference concentration. Furthermore, using experimental datasets, we show that SCWs are frequent in practice and account for most of IRV's resistance to CM.

Finally, we demonstrate that in non-critical regime, i.e., for \mbox{\(\theta \neq \theta_c\)}, the convergence of the CM rate toward 0 or 1 is exponentially fast. This implies that our results for \(n \to \infty\) quickly become relevant even for finite \(n\). We also study how this speed varies with~\( \theta \).

\subsection{Related Work}

In addition to the literature already mentioned, the works most closely related to ours are those that study the CM rate using theoretical tools. Most of them focus on three-candidate elections within models like \emph{Impartial Culture} \cite{lepelley1999kimroush}, \emph{Impartial Anonymous Culture} \cite{lepelley1994vulnerability, favardin2002bordacopeland}, or \emph{Pólya-Eggenberger Urns} \cite{lepelley2003homogeneity}. Although they consistently show that IRV is more robust than other rules, they are limited to a specific number of candidates and offer little intuition for IRV's superior performance. \citet{kim1996manipulability} provide key results for large electorates under \emph{Impartial Culture} for Plurality and some other rules (\emph{positional scoring rules} in general, \emph{Maximin}, and \emph{Coombs}) but do not address the Two-Round System or IRV.

Concerning phase transitions, research on this subject is abundant in physics (see \citet{kadanoff2000statistical} for an overview) and in mathematics and computer science
 \cite{christensen2002percolation, bollobas2006percolation, duminil2018sixty}.
In voting theory, \citet{mossel2013smooth} and \citet{xia2022impact} also examine phase transitions in coalitional manipulability, focusing on varying numbers of manipulators. In contrast, our study considers the impact of the concentration parameter in the probabilistic distribution of preferences.

\subsection{Limitations}

The limitations of this work stem from its main assumptions. First, while the Perturbed Culture model is useful, it does not capture the full complexity of real-world preferences. Second, our analysis is limited to three voting rules, and extending this to other systems would be valuable. Finally, the concept of coalitional manipulation may face criticism due to coordination challenges or the lack of binding agreements among coalition members (see \citet[Introduction]{durand2015towards} for a response to these critiques).

\subsection{Roadmap}

The rest of the paper is organized as follows. Section~\ref{sec_definitions_and_notations} introduces key definitions and notations. Sections \ref{sec_plurality}, \ref{sec_two_round_system}, and \ref{sec_irv} respectively analyze Plurality, Two-Round System, and IRV. Section \ref{sec_convergence_speed} explores convergence speed. Section \ref{sec_conclusion} concludes with future work.

\section{Definitions and Notations}\label{sec_definitions_and_notations}

\subsection{Discrete and Continuous Profiles}

A \emph{discrete profile} \( P \) consists of three elements: a finite, non-empty set of candidates \( \mathcal{C}(P) \) with cardinality \( m(P) \); a finite, non-empty set of voters \( \mathcal{V}(P) \) with cardinality \( n(P) \); and for each voter \( v \in \mathcal{V}(P) \), a preference ranking \( P_v \) over the candidates in \( \mathcal{C}(P) \). 

For any preference ranking \( p \), let \( w(p, P) \) denote the \emph{weight} of~\( p \) in \( P \), i.e., the number of voters in \( P \) with ranking \( p \). The total weight of a discrete profile is simply the number of voters: \( w(P) = \sum_p w(p, P) = n(P) \).

A \emph{continuous profile} is similarly defined by three components: a finite, non-empty set of candidates \( \mathcal{C}(P) \) with cardinality \( m(P) \); a total weight \( w(P) \in (0, \infty) \); and for each ranking \( p \) over the candidates, a weight \( w(p, P) \in \mathbb{R} \), such that \( \sum_p w(p, P) = w(P) \).

For any profile \( P \), whether discrete or continuous, we define the associated \emph{normalized profile} \( \bar{P} \) as the continuous profile where the weight of each ranking \( p \) is given by \( w(p, \bar{P}) = \frac{w(p, P)}{w(P)} \). Viewing a profile as a vector of weights, we can naturally define its \emph{neighborhood} in the usual topological sense.

For any subset \( K \subseteq \mathcal{C}(P) \), let \( P_{K} \) be the restriction of \( P \) to the candidates in \( K \). For two distinct candidates \( c \) and \( d \), let \( P^{c \succ d} \) be the restriction to voters who prefer \( c \) over \( d \). Similarly, for a candidate~\( c \) and a position \( k \in \{1, \ldots, m(P)\} \), let \( P^{r(c) = k} \) be the restriction to voters ranking \( c \) in the \( k \)-th position. These notations can be combined to restrict the profile both by candidates and voters.

\subsection{Voting Rules}

A \emph{voting rule} \( f \) maps any profile, discrete or continuous, to a candidate from that profile. In this paper, we focus on \emph{homogeneous} voting rules, meaning that \( f(P) = f(\bar{P}) \) for any profile \( P \). In other words, the outcome depends only on the relative proportions of preference rankings, not the total weight.
Each particular voting rule is formally defined at the beginning of its respective section.

\subsection{Coalitional Manipulability}

When \( P \) is a discrete profile, we say that a voting rule \( f \) is \emph{coalitionally manipulable} (also abbreviated as CM) in \( P \), or that profile $P$ is CM in rule $f$, if there exists a \emph{target profile} \( Q \) with the same candidates and voters such that \( f(Q) \neq f(P) \), and for every voter \( v \in \mathcal{V}(P) \), if \( Q_v \neq P_v \), then \( v \) prefers \( f(Q) \) to \( f(P) \) based on \( P_v \). In other words, only voters who benefit from the new outcome may alter their ballots, though some may keep their original votes. 

An immediate consequence is as follows. If for a ranking \( p \), we have \( w(p, Q) < w(p, P) \), then at least one voter with ranking \( p \) in \( P \) must have changed their ballot in \( Q \), i.e., \( Q_v \neq P_v \). By the definition, this implies that \( f(Q) \) is preferred to \( f(P) \) according to \( P_v = p \). This observation will now serve as the basis for defining CM in the continuous case.

For a continuous profile \( P \), we say that a voting rule \( f \) is CM in~\( P \) (or that \( P \) is CM in \( f \)) if there exists a target profile \( Q \) with the same candidates and total weight such that \( f(Q) \neq f(P) \) and, for every ranking \( p \), if \( w(p, Q) < w(p, P) \), then \( f(Q) \) is preferred to \( f(P) \) according to \( p \). In other words, only voters (in a continuous sense) who prefer the new outcome can have changed their ballots.

The relationship between the two notions is clarified by:

\begin{lemma}\label{lem_cm_discrete_implies_cm_continuous}
If a homogeneous rule \( f \) is CM in a discrete profile \( P \), then \( f \) is also CM in the corresponding normalized profile \( \bar{P} \). However, the converse is not true.
\end{lemma}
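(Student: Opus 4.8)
\textit{Proof proposal.}

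The plan is to handle the two claims separately, using the normalized profile \( \bar{Q} \) as the natural witness for the first and an integrality obstruction for the second.

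\emph{Forward direction.} Assume \( f \) is CM in the discrete profile \( P \), witnessed by a target profile \( Q \) with \( \mathcal{C}(Q) = \mathcal{C}(P) \), \( \mathcal{V}(Q) = \mathcal{V}(P) \), \( f(Q) \neq f(P) \), and such that every voter \( v \) with \( Q_v \neq P_v \) prefers \( f(Q) \) to \( f(P) \) according to \( P_v \). I would take \( \bar{Q} \) as the candidate witness for CM of \( \bar{P} \). First, \( \bar{Q} \) and \( \bar{P} \) share the same candidate set and the same total weight \( 1 \), as required. Second, since \( f \) is homogeneous, \( f(\bar{Q}) = f(Q) \neq f(P) = f(\bar{P}) \), so the outcome genuinely changes. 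It then remains to check the weight-decrease condition. Writing \( n = n(P) = n(Q) > 0 \), for every ranking \( p \) we have \( w(p, \bar{Q}) = w(p, Q)/n \) and \( w(p, \bar{P}) = w(p, P)/n \), so \( w(p, \bar{Q}) < w(p, \bar{P}) \) holds exactly when \( w(p, Q) < w(p, P) \). By the observation preceding the statement, the latter forces some voter \( v \) with \( P_v = p \) to satisfy \( Q_v \neq P_v \), whence \( f(Q) \) is preferred to \( f(P) \) according to \( p \); by homogeneity this says \( f(\bar{Q}) \) is preferred to \( f(\bar{P}) \) according to \( p \). Thus \( \bar{Q} \) witnesses CM of \( \bar{P} \), and no step here is delicate.

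\emph{Converse.} To show the converse fails, I would exhibit a single homogeneous rule \( f \) and a discrete profile \( P \) for which \( \bar{P} \) is CM but \( P \) is not. The guiding principle is an integrality gap: a continuous target may move a fractional share of weight---any positive real amount---whereas a discrete target can only move whole voters, i.e., weight in multiples of \( 1/n \). I would therefore seek a profile in which the minimal weight shift that flips the winner favorably lies strictly between two consecutive multiples of \( 1/n \). Plurality is a poor source of such examples, since its first-place scores are integers and the flip threshold is itself an integer; I would instead use the Two-Round System or IRV, where a majority threshold at half the electorate, or a delicate elimination order, places the critical shift at a non-integer multiple of \( 1/n \). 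Concretely, I would design \( P \) (with an odd number of voters and a fixed tie-breaking order) so that in \( \bar{P} \) a shift of fractional weight from a ranking whose voters approve the new winner just reaches the threshold, while in \( P \) the only integer shift of approving voters either falls short of the threshold or overshoots into a configuration that, exploiting the non-monotonicity of IRV, no longer elects the intended candidate, so that no valid discrete target exists.

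The main obstacle is the converse, and within it the verification that \( P \) is \emph{not} CM: unlike the forward direction, this requires ruling out every admissible discrete target, i.e., checking that for each candidate the coalition of voters who would prefer that candidate to \( f(P) \) cannot, by any reallocation of their own ballots, install it as the winner. This is a finite but case-heavy computation, and the example must be tuned so that the exact fractional threshold exploited in \( \bar{P} \) falls in a forbidden gap \( (k/n, (k+1)/n) \) for the relevant integer \( k \). Once such a profile is pinned down, both the CM of \( \bar{P} \) and the non-CM of \( P \) follow by direct inspection.
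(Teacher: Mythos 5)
Your forward direction is correct and is essentially the paper's argument: the paper dismisses it as following "from the definitions," and your explicit verification---taking \( \bar{Q} \) as witness, using homogeneity to get \( f(\bar{Q}) = f(Q) \neq f(P) = f(\bar{P}) \), and noting that \( w(p,\bar{Q}) < w(p,\bar{P}) \) holds exactly when \( w(p,Q) < w(p,P) \)---is precisely the intended reasoning.

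The converse half, however, has a genuine gap: the claim there is an existence statement (there is a homogeneous rule and a discrete profile where \( \bar{P} \) is CM but \( P \) is not), and you never exhibit the witness. You correctly identify the underlying phenomenon, an integrality gap where the continuous manipulation requires a weight split lying strictly between two consecutive discrete options, and you correctly rule out Plurality. But the plan to realize this with the Two-Round System or IRV is left entirely unexecuted: no profile is constructed, the decisive claim that the critical shift can be forced into a forbidden interval \( (k/n, (k+1)/n) \) is not substantiated, and the appeal to IRV's non-monotonicity to dispose of the "overshoot" case is speculative. Since, as you yourself note, certifying that \( P \) is \emph{not} CM requires ruling out every admissible discrete target, the proposal stops exactly where the proof has to start.

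For comparison, the paper closes this with a small, directly checkable example that avoids multi-round rules altogether: the positional scoring rule with weights \( (7, 6, 0) \), with 8 voters and 3 candidates, where 3 voters rank \( 1 \succ 3 \succ 2 \) and 5 voters rank \( 2 \succ 1 \succ 3 \). The 5 manipulators prefer only candidate 2 to the sincere winner 1, so they must elect 2 and optimally keep \( s(2) = 35 \); if \( k \) of them vote \( 2 \succ 3 \succ 1 \) and the rest \( 2 \succ 1 \succ 3 \), then \( s(1) = 51 - 6k \) and \( s(3) = 18 + 6k \), so success requires \( 16 < 6k < 17 \), which admits fractional solutions (e.g., \( 6k = 16.5 \), available in \( \bar{P} \)) but no integer one. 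The structural lesson your sketch misses is that the gap is easiest to manufacture with a scoring rule whose per-ballot step size (here 6 points) exceeds the width of the success window (here 1 point), making both the CM of \( \bar{P} \) and the non-CM of \( P \) verifiable in a few lines, whereas for the Two-Round System or IRV the exhaustive non-CM check you defer is exactly the case-heavy computation that a complete proof cannot defer.
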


The direct implication follows from the definitions, so we will focus on providing a counterexample to show that the converse does not hold. Consider the \emph{positional scoring rule} \( f \) with weights \( (7, 6, 0, \ldots, 0) \), where each candidate's score is given by \( s(c) = 7 w(P^{r(c) = 1}) + 6 w(P^{r(c) = 2}) \), and the candidate with the highest score wins (using a tie-breaking rule if needed). Now, consider a discrete profile \( P \) with 8 voters and 3 candidates:
\begin{itemize}
	\item 3 voters have the ranking \( 1 \succ 3 \succ 2 \),
	\item 5 voters have the ranking \( 2 \succ 1 \succ 3 \).
\end{itemize}
It is straightforward to verify that candidate~1 wins under \( f \), and that the rule is CM in the normalized profile \( \bar{P} \), but not in the original discrete profile \( P \).
The issue is that the 5 manipulators supporting candidate~2 must carefully distribute their points between candidates~1 and~3
, which is impossible in the discrete case because each manipulator must assign their entire vote to one ranking rather than splitting it fractionally.

\subsection{Perturbed Culture}

Given two positive integers \( m \) and \( n \), and a \emph{concentration parameter} \( \theta \in (0, 1] \), the \emph{Perturbed Culture} model is defined as follows. A discrete profile \( P \) is randomly generated with \( \mathcal{C}(P) = \{1, \ldots, m\} \) and \( \mathcal{V}(P) = \{1, \ldots, n\} \). Each voter is independently assigned the ranking \( (1 \succ \ldots \succ m) \) with probability \( \theta \), and a uniformly random ranking with probability \( 1 - \theta \).

As \( \theta \to 0 \), this model converges to the classical \emph{Impartial Culture} model, while for \( \theta = 1 \), it becomes a deterministic culture where all voters share the ranking \( (1 \succ \ldots \succ m) \).\footnote{We exclude the case \( \theta = 0 \) from our theoretical analysis to simplify the proofs: as \( n \to \infty \), assuming \( \theta > 0 \) guarantees that candidate~1 wins under sincere voting for all three voting rules considered. Nevertheless, our results hold even in the case \( \theta = 0 \), and we will also include it in our figures.}

Since a profile can be represented as a vector giving the weight of each ranking, we can define the \emph{expected normalized profile} (or simply the \emph{expected profile}) under Perturbed Culture. To simplify notation, we denote it by \( \hat{P} \), leaving its dependency on \( m \) and \( \theta \) implicit. In this profile, the ranking \( (1 \succ \ldots \succ m) \) has a weight of \( \theta + \frac{1 - \theta}{m!} \), while each of the other rankings has a weight of \( \frac{1 - \theta}{m!} \).

\subsection{CM Rate}

We denote by \( \rho(f, m, n, \theta) \) the \emph{CM rate}, i.e., the probability that a voting rule \( f \) is CM in a profile drawn from the Perturbed Culture model with $m$ candidates, $n$ voters and concentration $\theta$.

\section{Plurality}\label{sec_plurality}

We will begin our study with the \emph{Plurality} voting rule, which assigns each candidate \( c \) in a profile \( P \) a score equal to the total weight of voters ranking \( c \) first: \( s_\Plu{}(c, P) = w( P^{r(c) = 1} ) \). The winner is the candidate with the highest score (using a tie-breaking rule if needed): \( \Plu{}(P) = \arg\max s_\Plu{}(c, P) \). The specific tie-breaking method will not affect our findings.

\subsection{Theoretical Results for Plurality}

The intuition behind our theoretical results is as follows. First, we analyze Plurality’s behavior in the expected normalized profile~\( \hat{P} \) as a function of \( \theta \). For small \( \theta \), Plurality is CM in this profile, but for large enough \( \theta \), it is not. Using the weak law of large numbers, we then show that as \( n \to \infty \), the normalized random profile~\( \bar{P} \) will, with high probability (i.e., with a probability that tends to 1 when \(n \to \infty\)), be close enough to~\( \hat{P} \), ensuring that Plurality behaves similarly. Throughout this subsection, we assume \( m \geq 2 \).

We begin by analyzing the expected profile \( \hat{P} \). In this profile, the plurality score for candidate 1 is \( s_{\Plu{}}(1, \hat{P}) = \theta + \frac{1 - \theta}{m} \), while the number of voters inclined to manipulate for any candidate \( c \neq 1 \) is \( w(\hat{P}^{c \succ 1}) = \frac{1 - \theta}{2} \). If all manipulators vote optimally for \( c \), they succeed if \( w(\hat{P}^{c \succ 1}) > s_{\Plu{}}(1, \hat{P}) \), which simplifies to \( \theta < \frac{m - 2}{3 m - 2} \). Defining the \emph{critical value} \( \theta_c(\Plu{}, m) = \frac{m - 2}{3 m - 2} \), we conclude that Plurality is CM for \( \theta < \theta_c(\Plu{}, m) \) and not CM for \( \theta > \theta_c(\Plu{}, m) \) (the equality case is not needed for our forthcoming analysis).

We now apply the weak law of large numbers to show that as \( n \to \infty \), these results hold with high probability. We start by examining the \emph{supercritical regime} \( \theta > \theta_c(\Plu{}, m) \), relying on the following lemma.

\begin{lemma}\label{lem_not_CM_in_neighborhood}
	Assume there exists a neighborhood of the expected normalized profile \( \hat{P} \) where the homogeneous rule \( f \) is not CM. Then \( \lim_{n \to \infty} \rho(f, m, n, \theta) = 0 \).
\end{lemma}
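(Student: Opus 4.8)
The plan is to bound the CM rate by the probability that the normalized random profile \( \bar{P} \) falls outside the given neighborhood of \( \hat{P} \), and then to show that this probability vanishes by the weak law of large numbers.

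First, let \( U \) denote a neighborhood of \( \hat{P} \) on which \( f \) is not CM, as provided by the hypothesis. I would view the random discrete profile \( P \) drawn from the Perturbed Culture model as a vector of ranking-weights, with \( w(p, P) = \sum_{v \in \mathcal{V}(P)} \mathbb{1}[P_v = p] \) for each of the \( m! \) rankings \( p \). The crucial observation is the following chain of implications: if \( f \) is CM in the discrete profile \( P \), then by Lemma~\ref{lem_cm_discrete_implies_cm_continuous} (using that \( f \) is homogeneous) \( f \) is also CM in the normalized profile \( \bar{P} \); since \( f \) is not CM anywhere in \( U \), this forces \( \bar{P} \notin U \). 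Consequently the event ``\( f \) is CM in \( P \)'' is contained in the event ``\( \bar{P} \notin U \)'', so that
\[
\rho(f, m, n, \theta) = \Pr[f \text{ is CM in } P] \le \Pr[\bar{P} \notin U].
\]

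It then remains to show \( \Pr[\bar{P} \notin U] \to 0 \). For each fixed ranking \( p \), the indicators \( \mathbb{1}[P_v = p] \) are i.i.d.\@ across the \( n \) voters with mean \( w(p, \hat{P}) \) (namely \( \theta + \frac{1 - \theta}{m!} \) for the ranking \( 1 \succ \ldots \succ m \) and \( \frac{1 - \theta}{m!} \) otherwise), so by the weak law of large numbers \( w(p, \bar{P}) = \frac{1}{n} \sum_{v} \mathbb{1}[P_v = p] \) converges in probability to \( w(p, \hat{P}) \). Because there are only finitely many rankings, the full weight vector \( \bar{P} \) converges in probability to \( \hat{P} \), whence \( \Pr[\bar{P} \in U] \to 1 \) and \( \Pr[\bar{P} \notin U] \to 0 \). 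Combining this with the displayed inequality yields \( \lim_{n \to \infty} \rho(f, m, n, \theta) = 0 \).

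The only genuinely delicate point is the bridge between the discrete and continuous notions of CM: the hypothesis is stated for continuous profiles, whereas the random profile \( P \) is discrete, and the two notions are not equivalent, since Lemma~\ref{lem_cm_discrete_implies_cm_continuous} gives only one direction. The argument works precisely because the direction we need---discrete CM implies continuous CM in the normalization---is the one that holds, so the non-CM neighborhood of \( \hat{P} \) can be leveraged to rule out discrete manipulability with high probability. Everything else reduces to a routine application of the weak law of large numbers to a finite-dimensional weight vector.
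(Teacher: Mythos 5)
Your proof is correct and follows essentially the same route as the paper's: the weak law of large numbers places \( \bar{P} \) in the non-CM neighborhood of \( \hat{P} \) with high probability, and the (contrapositive of the) one-directional Lemma~\ref{lem_cm_discrete_implies_cm_continuous} transfers non-manipulability from \( \bar{P} \) back to the discrete profile \( P \). Your write-up merely makes explicit what the paper leaves implicit (the coordinate-wise application of the law of large numbers to the finitely many ranking weights, and the fact that the needed direction of the discrete--continuous bridge is exactly the one that holds), which is a sound elaboration rather than a different argument.
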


\begin{proof}
	Applying the weak law of large numbers, the following statements hold with high probability: denoting $P$ the random profile, its normalized version \( \bar{P} \) lies in the desired neighborhood of~\( \hat{P} \), hence (by assumption) $f$ is not CM in \( \bar{P} \), hence (by Lemma~\ref{lem_cm_discrete_implies_cm_continuous}) $f$ is also not CM in the random discrete profile \( P \).
\end{proof}

This lemma applies easily to Plurality. For \( \theta > \theta_c(\Plu{}, m) \), we have shown that for every candidate \( c \neq 1 \), \( w(\hat{P}^{c \succ 1}) < s_{\Plu{}}(1, \hat{P}) \). As this is a strict inequality, it holds in a neighborhood of the profile, allowing us to apply Lemma~\ref{lem_not_CM_in_neighborhood}. Hence, \( \lim_{n \to \infty} \rho(\Plu{}, m, n, \theta) = 0 \).

We now turn to the \emph{subcritical regime} \( \theta < \theta_c(\Plu{}, m) \). Unfortunately, we cannot directly apply the same reasoning: even if the normalized profile \( \bar{P} \) is CM near \( \hat{P} \), it does not necessarily follow that the discrete profile \( P \) is also CM, as Lemma~\ref{lem_cm_discrete_implies_cm_continuous} does not hold in the reverse direction.

However, for Plurality, manipulators can always employ a common strategy. Formally, a voting rule \( f \) is \emph{unison-manipulable} (UM) in profile \( P \) (or \( P \), in \( f \)) if manipulation can succeed even when all interested voters cast the same ballot~\cite{walsh2010empirical, durand2023coalitional}.\footnote{The term \emph{unison} was introduced by \citet{walsh2010empirical} but we follow the slightly different definition proposed by \citet{durand2023coalitional}.} Clearly, UM implies CM. Unlike CM, UM holds equivalently for both a discrete profile~\( P \) and its normalized profile \( \bar{P} \), which leads to the following lemma.

\begin{lemma}\label{lem_um_in_neighborhood}
	Assume there exists a neighborhood of the expected normalized profile $\hat{P}$ where the homogeneous rule $f$ is UM. Then \( \lim_{n\to\infty} \rho(f, m, n, \theta) = 1 \).
\end{lemma}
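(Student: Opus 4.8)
The plan is to mirror the proof of Lemma~\ref{lem_not_CM_in_neighborhood}, swapping the roles of ``not CM'' and ``UM'' and, crucially, replacing the one-directional implication of Lemma~\ref{lem_cm_discrete_implies_cm_continuous} by the \emph{bidirectional} equivalence of unison-manipulability under normalization noted just above. First I would invoke the weak law of large numbers: since each voter's ranking is drawn independently, the empirical weights $w(p, \bar{P})$ concentrate around their expectations $w(p, \hat{P})$, so with high probability the normalized random profile $\bar{P}$ lies inside the prescribed neighborhood of $\hat{P}$, on which $f$ is UM by hypothesis.

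Next I would transfer this property back to the discrete profile $P$. Using that $f$ is UM in $P$ if and only if it is UM in $\bar{P}$, the previous step shows that $f$ is UM in the discrete random profile $P$ with high probability. Since UM implies CM, $f$ is then CM in $P$ with high probability, and letting $n \to \infty$ gives $\lim_{n\to\infty} \rho(f, m, n, \theta) = 1$.

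The main obstacle, and the reason this argument succeeds where the naive analogue of Lemma~\ref{lem_not_CM_in_neighborhood} fails in the subcritical regime, is the reverse direction of the UM equivalence: lifting a manipulation of the continuous profile $\bar{P}$ to a genuine integer coalition in $P$. For CM this step is false, as witnessed by the $(7, 6, 0, \ldots, 0)$ scoring rule, where a continuous coalition can split its weight fractionally but a discrete one cannot. The unison constraint removes precisely this obstruction: because all interested voters cast one and the same ballot, the discrete coalition reproduces the continuous manipulation verbatim by moving every interested voter to that single ranking. To make this rigorous I would lean on homogeneity of $f$: the sincere winner satisfies $f(P) = f(\bar{P})$, the set of rankings preferring the target outcome is identical in the two profiles, and any unison target $Q$ satisfies $f(Q) = f(\bar{Q})$, so the unison deviation produces the same outcome in the discrete and continuous settings. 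Together these observations establish the equivalence and close the argument.
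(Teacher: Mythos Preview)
Your proposal is correct and follows essentially the same approach as the paper: weak law of large numbers puts $\bar{P}$ in the UM neighborhood with high probability, then the discrete--continuous equivalence of UM (which you justify more carefully than the paper does) transfers UM to $P$, and UM $\Rightarrow$ CM finishes. Your extended discussion of why the unison constraint makes the reverse implication go through is a welcome addition, since the paper only asserts this equivalence without proof.
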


The proof is similar to Lemma~\ref{lem_not_CM_in_neighborhood}: by the weak law of large numbers, with high probability, the normalized random profile \( \bar{P} \) is in the desired neighborhood, making it UM, hence the random discrete profile \( P \) is also UM, and thus CM. Applied to Plurality, Lemma~\ref{lem_um_in_neighborhood} directly leads to \( \lim_{n\to\infty} \rho(f, m, n, \theta) = 1 \) for \( \theta < \theta_c(\Plu{}, m) \).

The following theorem summarizes our results so far.

\begin{theorem}\label{thm_plurality_theta_c}
	Let \( \theta_c(\Plu{}, m) = \frac{m - 2}{3 m - 2} \) with \( m \geq 2 \).
	\begin{itemize}
		\item If \( \theta < \theta_c(\Plu{}, m) \), then \( \lim_{n \to \infty} \rho(\Plu{}, m, n, \theta) = 1 \).
		\item If \( \theta > \theta_c(\Plu{}, m) \), then \( \lim_{n \to \infty} \rho(\Plu{}, m, n, \theta) = 0 \).
	\end{itemize}
\end{theorem}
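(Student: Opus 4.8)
The plan is to handle the two regimes separately, in each case reducing the asymptotic statement about the random discrete profile to a deterministic computation on the expected profile \( \hat{P} \) via the two neighborhood lemmas. Throughout, note that for \( \theta > 0 \) candidate~1 is the unique Plurality winner in \( \hat{P} \), since \( s_{\Plu{}}(1,\hat{P}) = \theta + \frac{1-\theta}{m} \) exceeds the score \( \frac{1-\theta}{m} \) of every other candidate.

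For the supercritical regime \( \theta > \theta_c(\Plu{}, m) \), I would argue that no coalition can succeed in \( \hat{P} \). The only useful coalitional move in Plurality is for all voters preferring a fixed challenger \( c \neq 1 \) to candidate~1 to cast \( c \) first: this raises \( c \)'s score to exactly \( w(\hat{P}^{c \succ 1}) = \frac{1-\theta}{2} \) and, since no voter ranking~1 first is such a manipulator, leaves \( s_{\Plu{}}(1,\hat{P}) \) unchanged. The inequality \( \frac{1-\theta}{2} < \theta + \frac{1-\theta}{m} \), equivalent to \( \theta > \theta_c(\Plu{},m) \), is strict and hence persists on a neighborhood of \( \hat{P} \), where \( f \) is therefore not CM; Lemma~\ref{lem_not_CM_in_neighborhood} then yields \( \lim_{n\to\infty} \rho(\Plu{},m,n,\theta) = 0 \).

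For the subcritical regime \( \theta < \theta_c(\Plu{}, m) \), the same computation gives the reversed strict inequality \( w(\hat{P}^{c \succ 1}) > s_{\Plu{}}(1,\hat{P}) \) for any fixed \( c \neq 1 \), so the coalition wins even when every manipulator casts the identical ballot ranking \( c \) first---that is, \( f \) is UM in \( \hat{P} \). Strictness again propagates to a neighborhood, and Lemma~\ref{lem_um_in_neighborhood} delivers \( \lim_{n\to\infty} \rho(\Plu{},m,n,\theta) = 1 \).

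I expect no real obstacle, since the probabilistic transfer from \( \hat{P} \) to the random profile is already packaged inside the two lemmas. The only point needing care is the combinatorial claim that concentrating all manipulators on a single challenger is simultaneously optimal (for the not-CM direction) and sufficient (for the UM direction); this rests on the observations that a Plurality manipulator for \( c \) never ranks~1 first and that uniting such manipulators behind \( c \) maximizes \( c \)'s score without lowering candidate~1's. With that established, both bullet points follow directly from the neighborhood lemmas applied to the strict inequalities.
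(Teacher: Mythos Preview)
Your proposal is correct and matches the paper's own argument essentially line for line: analyze manipulability of \( \hat{P} \) via the comparison \( w(\hat{P}^{c\succ 1}) = \frac{1-\theta}{2} \) versus \( s_{\Plu}(1,\hat{P}) = \theta + \frac{1-\theta}{m} \), observe that the resulting strict inequality persists on a neighborhood, and invoke Lemma~\ref{lem_not_CM_in_neighborhood} in the supercritical regime and Lemma~\ref{lem_um_in_neighborhood} (using that the optimal Plurality manipulation is unison) in the subcritical regime. The only cosmetic difference is that the paper states the optimality of the unison strategy up front rather than deferring it to a closing remark.
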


For \( m = 2 \), the theorem indicates \( \theta_c(\Plu{}, 2) = 0 \), which is expected, as Plurality cannot be manipulated with only two candidates. Similarly, for \( m = 1 \), we would reach the same conclusion by conventionally setting \( \theta_c(\Plu{}, 1) = 0 \). The theorem becomes more interesting for \( m \geq 3 \), where it describes a \emph{phase transition} around \( \theta_c(\Plu{}, m) \), meaning a sudden change in behavior as the parameter crosses this threshold. This raises key questions: What causes this discontinuity, and how do we approach it as \( n \) increases? What happens when \( \theta \) is equal to or near the critical value?

\subsection{Simulations for Plurality}

\begin{figure}
	\centering
	\input{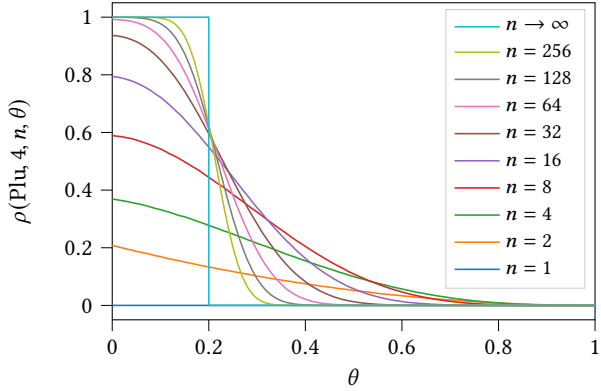}
	\caption{CM rate of Plurality as a function of \( \theta \) for different values of \( n \) with \( m = 4 \). Curves for finite \( n \) are based on Monte Carlo simulations with 1,000,000 profiles per point. The limiting curve as \( n \to \infty \) follows from Theorem~\ref{thm_plurality_theta_c}.}
	\label{fig_plu_cm_rate_of_theta_and_n_v}
	\Description{CM rate of Plurality as a function of theta for different values of n with m = 4. Curves for finite n are based on Monte Carlo simulations with 1,000,000 profiles per point. The limiting curve as n tends to infinity follows from our theoretical results.}
\end{figure}

To understand the origin of the discontinuity, Figure~\ref{fig_plu_cm_rate_of_theta_and_n_v} shows the CM rate of Plurality as a function of \( \theta \) for various \( n \) with \( m = 4 \). Curves for finite \( n \) are based on Monte Carlo simulations,\footnote{The code is available at \url{https://github.com/francois-durand/irv-cm-aamas-2025}.} with 1,000,000 profiles per point, leading to error margins of \( \frac{1}{\sqrt{1000000}} = 0.1\% \). The limiting curve for \( n \to \infty \) is derived from Theorem~\ref{thm_plurality_theta_c}. For finite \( n \), the curve is continuous.  As \( n \) increases, it becomes sigmoid-shaped and steepens, ultimately converging to a step function as \( n \to \infty \). 

The observed behavior mirrors what occurs in physics: since a finite combination of continuous functions remains continuous, non-analyticity can only arise in an infinite system \cite[Section 11.6]{kadanoff2000statistical}. 
As in physics, a phase transition occurs beyond a certain level of disorder: while a ferromagnetic metal loses its magnetization above the Curie temperature \cite{chikazumi1997physics}, Plurality loses its resistance to coalitional manipulation below the critical value of the concentration parameter~\( \theta \).

\begin{figure}
	\centering
\begin{tikzpicture}

\definecolor{crimson2143940}{RGB}{214,39,40}
\definecolor{darkgrey176}{RGB}{176,176,176}
\definecolor{darkorange25512714}{RGB}{255,127,14}
\definecolor{forestgreen4416044}{RGB}{44,160,44}
\definecolor{lightgrey204}{RGB}{204,204,204}
\definecolor{steelblue31119180}{RGB}{31,119,180}

\begin{axis}[reverse legend,
height=\axisHeight,
legend cell align={left},
legend style={font=\legendFont, fill opacity=1, draw opacity=1, text opacity=1, draw=lightgrey204},
legend pos=south east,
log basis x={10},
tick align=outside,
tick pos=left,
width=\axisWidth,
x grid style={darkgrey176},
xlabel={$n$},
xmin=1.0, xmax=1000,
xmode=log,
xtick style={color=black},
xmode=log,
y grid style={darkgrey176},
ylabel={$\rho(\text{Plu}, m, n, \theta_c(\text{Plu}, m))$},
ymin=-0.05, ymax=1.05,
ytick={-0.2, 0.0, 0.2, 0.4000000000000001, 0.6000000000000001, 0.8, 1.0000000000000002, 1.2000000000000002},
ytick style={color=black}
]
\addplot [semithick, steelblue31119180]
table {%
1 0
2 0.081567
3 0.157474
4 0.170542
5 0.231661
6 0.244106
7 0.281132
8 0.299572
9 0.32236
10 0.338951
11 0.356247
12 0.370793
13 0.384525
14 0.397044
15 0.408729
16 0.419106
17 0.429511
18 0.437013
19 0.445895
20 0.453346
21 0.461071
22 0.467627
23 0.473501
24 0.478436
25 0.484474
26 0.490005
27 0.495122
28 0.497907
29 0.502796
30 0.505565
31 0.508733
32 0.512484
33 0.516416
34 0.519258
35 0.521766
36 0.52428
37 0.528044
38 0.528772
39 0.531339
40 0.533469
41 0.53488
42 0.537406
43 0.538839
44 0.540549
45 0.54177
46 0.543557
47 0.543438
48 0.546643
49 0.547347
50 0.549067
51 0.549746
52 0.550749
53 0.551627
54 0.552769
55 0.552338
56 0.554126
57 0.555042
58 0.555698
59 0.556517
60 0.558622
61 0.55831
62 0.558445
63 0.560072
64 0.559491
65 0.560674
66 0.56052
67 0.561517
68 0.561529
69 0.56244
70 0.562275
71 0.563731
72 0.563665
73 0.564071
74 0.565129
75 0.564629
76 0.565405
77 0.565702
78 0.566221
79 0.566172
80 0.565782
81 0.567196
82 0.567672
83 0.567459
84 0.567885
85 0.567548
86 0.567685
87 0.567664
88 0.568419
89 0.568254
90 0.568532
91 0.569585
92 0.569407
93 0.569236
94 0.570191
95 0.569031
96 0.570267
97 0.569737
98 0.571005
99 0.570016
100 0.570672
125 0.573146
158 0.575854
199 0.576767
251 0.579197
316 0.581613
398 0.58257
501 0.583311
630 0.583523
794 0.585407
1000 0.585653
};
\addlegendentry{$m=3$}
\addplot [semithick, darkorange25512714]
table {%
1 0
2 0.133102
3 0.266636
4 0.277303
5 0.367169
6 0.381135
7 0.425513
8 0.444741
9 0.467914
10 0.485436
11 0.500954
12 0.512965
13 0.523955
14 0.534406
15 0.542132
16 0.549401
17 0.554261
18 0.560205
19 0.565782
20 0.569755
21 0.573558
22 0.576145
23 0.579635
24 0.581824
25 0.584952
26 0.58646
27 0.588477
28 0.591391
29 0.59196
30 0.592951
31 0.594191
32 0.595813
33 0.597277
34 0.598192
35 0.598797
36 0.598512
37 0.600891
38 0.6013
39 0.601714
40 0.601336
41 0.602472
42 0.603605
43 0.603879
44 0.603836
45 0.605102
46 0.605902
47 0.605631
48 0.605706
49 0.606978
50 0.607432
51 0.607671
52 0.607802
53 0.607968
54 0.60776
55 0.608975
56 0.609556
57 0.608385
58 0.609783
59 0.61007
60 0.609764
61 0.610707
62 0.610792
63 0.610005
64 0.610994
65 0.611078
66 0.612219
67 0.612541
68 0.611979
69 0.61232
70 0.612667
71 0.612377
72 0.612546
73 0.612871
74 0.613391
75 0.613271
76 0.613523
77 0.613343
78 0.613091
79 0.61384
80 0.613798
81 0.61441
82 0.615611
83 0.614557
84 0.613602
85 0.615208
86 0.615295
87 0.614794
88 0.615266
89 0.615984
90 0.615322
91 0.616602
92 0.616576
93 0.616212
94 0.616277
95 0.616186
96 0.616767
97 0.616509
98 0.616263
99 0.617624
100 0.617238
125 0.618635
158 0.622302
199 0.624305
251 0.624923
316 0.626913
398 0.627985
501 0.629958
630 0.631392
794 0.631688
1000 0.633079
};
\addlegendentry{$m=4$}
\addplot [semithick, forestgreen4416044]
table {%
1 0
2 0.169113
3 0.340604
4 0.346674
5 0.448318
6 0.457756
7 0.502352
8 0.518208
9 0.538645
10 0.553015
11 0.564087
12 0.573922
13 0.581885
14 0.588744
15 0.59399
16 0.598356
17 0.603559
18 0.605652
19 0.608385
20 0.611342
21 0.613112
22 0.615645
23 0.617325
24 0.6185
25 0.619911
26 0.621653
27 0.622837
28 0.623586
29 0.624266
30 0.625324
31 0.626569
32 0.627042
33 0.628494
34 0.627786
35 0.629501
36 0.629546
37 0.630784
38 0.63129
39 0.63056
40 0.630548
41 0.632872
42 0.632649
43 0.633309
44 0.633606
45 0.633769
46 0.634564
47 0.634367
48 0.634886
49 0.636513
50 0.636575
51 0.636299
52 0.637382
53 0.637485
54 0.637115
55 0.6373
56 0.637663
57 0.638735
58 0.638179
59 0.639638
60 0.638782
61 0.638546
62 0.639897
63 0.639696
64 0.640905
65 0.639923
66 0.640982
67 0.64145
68 0.640924
69 0.641406
70 0.640841
71 0.641322
72 0.64187
73 0.64206
74 0.641464
75 0.642153
76 0.642778
77 0.642921
78 0.64229
79 0.642634
80 0.643752
81 0.642979
82 0.64367
83 0.64403
84 0.643675
85 0.644082
86 0.644164
87 0.643589
88 0.644699
89 0.645103
90 0.645244
91 0.645601
92 0.645629
93 0.644692
94 0.64582
95 0.644872
96 0.645595
97 0.645891
98 0.645865
99 0.646131
100 0.645187
125 0.648289
158 0.651416
199 0.652919
251 0.654452
316 0.655854
398 0.657953
501 0.658744
630 0.660527
794 0.662524
1000 0.662682
};
\addlegendentry{$m=5$}
\addplot [semithick, crimson2143940]
table {%
1 0
2 0.196763
3 0.393751
4 0.392037
5 0.500047
6 0.50518
7 0.548344
8 0.560674
9 0.578924
10 0.590423
11 0.598692
12 0.606506
13 0.612189
14 0.617433
15 0.621883
16 0.624395
17 0.62715
18 0.630035
19 0.632771
20 0.634894
21 0.63638
22 0.637567
23 0.639003
24 0.639647
25 0.641017
26 0.642403
27 0.643835
28 0.643415
29 0.644935
30 0.646141
31 0.647292
32 0.647862
33 0.648216
34 0.649344
35 0.650006
36 0.650556
37 0.650659
38 0.651853
39 0.652524
40 0.651836
41 0.652366
42 0.653203
43 0.654429
44 0.655322
45 0.654731
46 0.655047
47 0.657345
48 0.654903
49 0.656365
50 0.656177
51 0.657446
52 0.656681
53 0.657292
54 0.657106
55 0.658169
56 0.657663
57 0.658991
58 0.659349
59 0.659732
60 0.659692
61 0.660242
62 0.660715
63 0.661214
64 0.661125
65 0.661149
66 0.661472
67 0.661068
68 0.661693
69 0.661577
70 0.662278
71 0.661563
72 0.662575
73 0.663373
74 0.662595
75 0.663355
76 0.663609
77 0.662664
78 0.66344
79 0.663215
80 0.663943
81 0.664955
82 0.664491
83 0.665247
84 0.664473
85 0.664639
86 0.665016
87 0.665614
88 0.665353
89 0.664703
90 0.665634
91 0.665261
92 0.664995
93 0.66545
94 0.666268
95 0.666979
96 0.666064
97 0.666433
98 0.667212
99 0.665987
100 0.666737
125 0.669437
158 0.671924
199 0.673719
251 0.675553
316 0.677678
398 0.678723
501 0.679769
630 0.681499
794 0.682553
1000 0.682757
};
\addlegendentry{$m=6$}
\end{axis}

\end{tikzpicture}
	\caption{CM rate of Plurality as a function of $n$ for different values of $m$ with $\theta = \theta_c(\Plu{}, m)$. Monte Carlo simulations with 1,000,000 profiles per point.}
	\label{fig_plu_cm_rate_at_theta_c_of_n_v_n_c}
	\Description{CM rate of Plurality as a function of n for different values of m with critical theta. Monte Carlo simulations with 1,000,000 profiles per point.}
\end{figure}
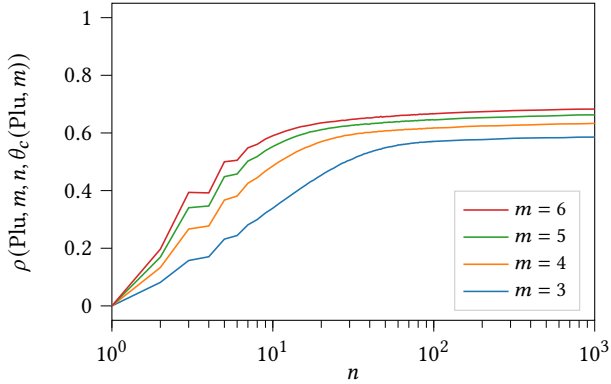

Theorem~\ref{thm_plurality_theta_c} describes the behavior in the subcritical and supercritical regimes, but what happens in the \emph{critical regime}, i.e., when $\theta = \theta_c(\Plu{}, m)$?
Figure~\ref{fig_plu_cm_rate_at_theta_c_of_n_v_n_c} shows the CM rate in that case as a function of the number of voters~\( n \), for different values of \( m \). This leads to several conjectures:
\begin{itemize}
	\item The \emph{critical CM rate} \( \rho(\Plu{}, m, n, \theta_c(\Plu{}, m)) \) converges to a limit as \( n \to \infty \).
	\item This limit is strictly less than 1.
	\item This limit increases with \( m \).
\end{itemize}

It is beyond the scope of this paper to theoretically prove these results. In the study of phase transitions, analyzing the critical behavior is often challenging \cite{newman2000efficient,duminil2017lectures,kos2016precision, komargodski2017random}.
With that, we conclude our study of Plurality and proceed to the Two-Round System.

\section{Two Round System}\label{sec_two_round_system}

The \emph{Two-Round System} (TR) is as follows. In the first round, each candidate \( c \) receives a score \( s^1_\TR{}(c, P) = s_\Plu{}(c, P) \), and the set \( K \) of the two candidates with the highest scores advances to the second round. These two candidates then receive scores \( s^2_\TR{}(c, P) = s_\Plu{}(c, P_K) \), and the candidate with the highest score wins. A tie-breaking rule is applied if necessary.\footnote{For simplicity, we consider an ``instant'' version of TR, where voters cast their ballots once. In most actual implementations, voters participate in two rounds. While this is equivalent for sincere voting, the instant version restricts some manipulation strategies \cite[Table 1.1]{durand2015towards}. However, our results apply to both variants.}

\subsection{Theoretical Results for the Two-Round System}

For \( n = 2 \), the Two-Round System is equivalent to Plurality, so we focus on the case \( m \geq 3 \). As with Plurality, we first analyze the expected profile and then extend the results using the weak law of large numbers. The key difference is of technical nature: unison manipulation is generally insufficient in the Two-Round System.

We begin by examining the expected normalized profile \(\hat{P}\). Candidate 1 clearly wins the election, with the second-round opponent determined by the tie-breaking rule. For a manipulation to succeed in favor of a candidate \( c \neq 1 \), candidate \( c \) must reach the second round. However, if candidate 1 also advances, the manipulation will fail. Therefore, the second round must involve a candidate \( d \notin \{1, c\} \), which is possible since we assumed \( m \geq 3 \). Now, consider the portion of the first-round scores for candidates 1, \(c\), and \(d\) coming from sincere voters:
\[
\left\{\begin{array}{l}
s_\TR^1(1, \hat{P}^{1 > c}) = \theta + \frac{1 - \theta}{m}, \\[2pt]
s_\TR^1(c, \hat{P}^{1 > c}) = 0, \\[2pt]
s_\TR^1(d, \hat{P}^{1 > c}) = \frac{1 - \theta}{2m},
\end{array}\right.
\]  
where, for example, $s_\TR^1(d, \hat{P}^{1 > c})$ denotes the first-round score, in the two-round system, of candidate~$d$ in the restriction of the expected normalized profile $\hat{P}$ to the voters who prefer candidate~1 to candidate~$c$ (i.e., ``sincere'' voters).

For both candidates \( c \) and \( d \) to surpass candidate 1's score, at least \( \theta + \frac{1 - \theta}{m} \) manipulators must vote for \( c \), while \( \theta + \frac{1 - \theta}{m} - \frac{1 - \theta}{2m} \) must vote for \( d \). Therefore, the total number of manipulators, given by \( w(\hat{P}^{c \succ 1}) = \frac{1 - \theta}{2} \), must be at least the sum of these two quantities. Simplifying, the necessary condition becomes \( \theta \leq \frac{m - 3}{5m - 3} \). In other words, coalitional manipulation is impossible for \( \theta > \frac{m - 3}{5m - 3} \).

Let us now show that for \(\theta < \frac{m-3}{5m-3}\), manipulation is possible. Consider the specific case where \( c = 2 \) and \( d = 3 \). Suppose that:
\begin{itemize}
	\item A fraction \(\frac{2(1 - 2\theta)}{3(1 - \theta)}\) of manipulators vote \( (2 \succ \ldots \succ m \succ 1) \),
	\item And a fraction \(\frac{1 + \theta}{3(1 - \theta)}\) of them vote \( (3 \succ \ldots \succ m \succ 1 \succ 2) \).
\end{itemize}
The first fraction is positive since \(\theta < \frac{m - 3}{5m - 3} < \frac{1}{2}\), and the two fractions clearly sum to 1. We chose these values to ensure candidate~2’s lead over candidate~1 in the first round is double that of candidate~3’s, though the reasoning holds for slightly different values. Standard calculations then show that candidates~2 and 3 lead the first round, with candidate~2 winning the second.

In summary, the behavior of the Two-Round System in the expected normalized profile shifts at \(\theta_c(\TR{}, m) = \frac{m-3}{5m-3}\). The next step is to extend this result to discrete profiles as \(n \to \infty\). For the supercritical regime \( \theta > \theta_c(\TR{}, m) \), the reasoning mirrors that for Plurality, relying on Lemma~\ref{lem_not_CM_in_neighborhood}.

In the subcritical regime \( \theta < \theta_c \), the situation is more subtle. Unison manipulation is generally insufficient for the Two-Round System~\cite[Table 1.1]{durand2015towards}, so we cannot directly apply Lemma~\ref{lem_um_in_neighborhood}.

Let us take a step back for the intuition of the problem. Suppose that a rule \(f\) is CM in a neighborhood of the expected profile \(\hat{P}\), and consider a discrete profile \(P\) whose normalized form \(\bar{P}\) lies in this neighborhood, similarly to Lemma~\ref{lem_um_in_neighborhood} for unison manipulation. This means that \(f\) is CM from \(\bar{P}\) toward a target continuous profile \(Q\). The challenge arises from discretization: manipulators in the discrete profile \(P\) may only be able to target a profile whose normalized form approximates \(Q\).

However, as \(n\) grows large, these finite-size effects diminish, allowing manipulators to approach \(Q\) closely. For CM to succeed, it suffices that \(f\) produces the same outcome when a profile is sufficiently close to $Q$. Thus, we will rely on two neighborhoods: one for manipulability near \(\hat{P}\), and another for stability of the outcome around the target profile. Technically, to obtain the desired result, we will require the second neighborhood to be uniform with respect to the first, meaning that for different choices of \(\bar{P}\) near \(\hat{P}\), the neighborhood of \(Q\) that guarantees outcome stability does not shrink arbitrarily. Let us now formalize all this.

\begin{definition}
	Let \(\delta > 0\). We say that a rule \(f\) is \emph{\(\delta\)-stable-CM} in a continuous profile \(P\) (or that \( P \) is \(\delta\)-stable-CM in \( f \)) if there exists a continuous profile \(Q\) such that:
	\begin{itemize}
		\item \(f\) is CM from \(P\) to \(Q\),
		\item For any profile \(Q'\) with \(d(Q, Q') < \delta\), we have \(f(Q') = f(Q)\).
	\end{itemize}
	Here, \(d(Q, Q')\) denotes the \(\ell^\infty\)-distance between profiles, viewed as vectors of weights. However, our results hold for any norm-derived distance, as all norms are equivalent in finite-dimensional spaces.
\end{definition}

\begin{lemma}\label{lem_eta_stable_cm}
	Assume there exists \(\delta > 0\) and a neighborhood of the expected normalized profile \(\hat{P}\) where the homogeneous rule \(f\) is \(\delta\)-stable-CM. Then \(\lim_{n\to\infty} \rho(f, m, n, \theta) = 1\).
\end{lemma}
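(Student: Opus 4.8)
The plan is to mirror the proofs of Lemmas~\ref{lem_not_CM_in_neighborhood} and~\ref{lem_um_in_neighborhood}, but with an extra discretization step that the $\delta$-stability hypothesis is precisely designed to absorb. First I would invoke the weak law of large numbers: with high probability the normalized random profile $\bar P$ lands in the neighborhood of $\hat P$ on which $f$ is $\delta$-stable-CM. Fixing such a realization, the hypothesis supplies a continuous target $Q$ (same candidates, total weight $1$) with $f$ CM from $\bar P$ to $Q$, with $c := f(Q) \neq f(\bar P)$, and with $f \equiv c$ on the $\delta$-ball around $Q$. Since $f(P) = f(\bar P)$ by homogeneity, it then suffices to produce a genuine \emph{discrete} manipulation of $P$ yielding outcome $c$.

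The core construction builds a discrete target $Q_{\mathrm{disc}}$ on the voters of $P$ whose normalization is within $\delta$ of $Q$. Writing $a_p = n\,w(p,P) \in \mathbb{Z}_{\geq 0}$ for the sincere count of ranking $p$ and $t_p = n\,w(p,Q)$ for its (generally non-integer) target count, note $\sum_p a_p = \sum_p t_p = n$. Call $p$ \emph{decreasing} if $t_p < a_p$ and \emph{increasing} if $t_p > a_p$; the continuous CM condition says exactly that every decreasing $p$ prefers $c$ to $f(P)$. I would round the $t_p$ to integer counts $k_p \in \{\lfloor t_p\rfloor, \lceil t_p\rceil\}$ with $\sum_p k_p = n$ via a largest-remainder scheme, fixing $k_p = a_p$ whenever $t_p = a_p$. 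This \emph{directional} rounding is consistent: $t_p < a_p$ with $a_p$ integer forces $k_p \leq a_p$ on decreasing rankings, $t_p > a_p$ forces $k_p \geq a_p$ on increasing ones, and $\sum_p \lfloor t_p\rfloor \leq n \leq \sum_p \lceil t_p\rceil$ guarantees the sum can be made exactly $n$. Each $|k_p - t_p| < 1$, so $d(\bar Q_{\mathrm{disc}}, Q) = \frac{1}{n}\max_p|k_p - t_p| < \frac{1}{n}$, which is below $\delta$ once $n > 1/\delta$.

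It then remains to verify that $Q_{\mathrm{disc}}$ is an admissible coalitional manipulation. Each decreasing ranking sheds $a_p - k_p \geq 0$ voters and each increasing ranking gains $k_p - a_p \geq 0$; the balance $\sum_p (a_p - k_p) = 0$ lets me reassign the departing voters onto the increasing rankings, leaving every other voter sincere. Each voter who moves started on a decreasing ranking, hence prefers $c$ to $f(P)$, so only beneficiaries change their ballots. Finally, homogeneity together with $\delta$-stability give $f(Q_{\mathrm{disc}}) = f(\bar Q_{\mathrm{disc}}) = f(Q) = c \neq f(P)$, so $P$ is CM; as this holds with high probability, $\lim_{n\to\infty}\rho(f,m,n,\theta)=1$.

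The hard part will be the directional rounding in the second step: I must approximate $Q$ to within the \emph{fixed} tolerance $\delta$ while never allowing a non-manipulator to move, i.e.\ decreasing rankings may only lose voters and increasing rankings may only receive them. This is exactly where the $\delta$-stable-CM hypothesis earns its keep: because $\delta$ is uniform over the whole neighborhood of $\hat P$, the $O(1/n)$ discretization error is eventually swallowed regardless of which $\bar P$ (and hence which target $Q$) the random draw produces. With only plain CM, $f$ could be discontinuous arbitrarily close to $Q$ and the rounding might flip the outcome, which is precisely the failure that distinguishes this lemma from the unison case of Lemma~\ref{lem_um_in_neighborhood}.
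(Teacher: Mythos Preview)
Your proof is correct and follows the same approach as the paper's: invoke the weak law of large numbers to place $\bar P$ in the $\delta$-stable-CM neighborhood, then discretize the continuous target $Q$ to within $1/n < \delta$ so that $\delta$-stability preserves the outcome. The paper's own proof states only that ``the manipulators in the discrete profile $P$ can approach $Q$ with a margin of one manipulator, causing deviations of at most $\frac{1}{n} < \delta$,'' whereas you make this rigorous via the directional largest-remainder rounding and the verification that only voters on decreasing rankings (hence beneficiaries) move---details the paper leaves implicit.
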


The uniformity property mentioned earlier lies in that the same value of \(\delta\) applies consistently on the entire neighborhood of \(\hat{P}\).

\begin{proof}
Consider \(n\) large enough such that \(\frac{1}{n} < \delta\). The weak law of large numbers ensures that, with high probability:
\begin{enumerate}
	\item The normalized random profile \(\bar{P}\) lies in the desired neighborhood and is therefore \(\delta\)-stable-CM to a target profile \(Q\).
	\item The manipulators in the discrete profile \(P\) can approach \(Q\) with a margin of one manipulator, causing deviations of at most \(\frac{1}{n} < \delta\) in each normalized weight.
	\item By assumption, this implies that the voting outcome for the approximated target profile matches that of \(Q\).
	\item Therefore, \(f\) is CM in \(P\).\qedhere
\end{enumerate}%
\end{proof}

We can now address the Two-Round System for \(\theta < \frac{m - 3}{5m - 3}\).

Let \(\zeta > 0\) (its value will be specified later). Consider a continuous profile \(P\) such that \(d(P, \hat{P}) < \frac{\zeta}{m!}\). Then, all the scores and score gaps calculated for \(\hat{P}\) still hold for \(P\), with an additive error of at most \(\zeta\). It follows that if \(\zeta < \theta\), candidate 1 wins in \(P\), just as in \( \hat{P} \).

As we did in the case of the expected profile, consider the profile \(Q\) obtained from \(P\) by modifying the ballots of voters who prefer candidate~2 over candidate~1 as follows:
\begin{itemize}
	\item A fraction \(\frac{2(1 - 2\theta)}{3(1 - \theta)}\) vote \((2 \succ \ldots \succ m \succ 1)\),
	\item And a fraction \(\frac{1 + \theta}{3(1 - \theta)}\) vote \((3 \succ \ldots \succ m \succ 1 \succ 2)\).
\end{itemize}
To apply Lemma~\ref{lem_eta_stable_cm}, it suffices to show that, sufficiently close to \(Q\) (including the profile \(Q\) itself), candidate~2 wins.

Let \(\xi > 0\) (specified later) and set \(\delta = \frac{\xi}{m!}\). For a profile \(Q'\) such that \(d(Q', Q) < \delta\), the first-round scores are bounded as follows:
\[
\left\{
\begin{array}{l}
	s_\TR^1(1, Q') < \left(\frac{1 - \theta}{m} + \theta + \zeta\right) + 0 + \xi, \\
	s_\TR^1(2, Q') > 0 + \left( \frac{1 - \theta}{2} - \zeta \right) \frac{2 (1 - 2 \theta)}{3 (1 - \theta)} - \xi, \\
	s_\TR^1(3, Q') > \left( \frac{1 - \theta}{2m} - \zeta \right) + \left( \frac{1 - \theta}{2} - \zeta \right) \frac{1 + \theta}{3 (1 - \theta)} - \xi, \\
	s_\TR^1(c, Q') < \left(\frac{1 - \theta}{2m} + \zeta\right) + 0 + \xi, \quad \text{for any } c > 3.
\end{array}
\right.
\]

Each score above is presented as the sum of three terms: the first comes from sincere voters in \(P^{1 \succ 2}\) and stays within \(\zeta\) of the corresponding value for \(\hat{P}^{1 \succ 2}\) (except for candidate 2, by definition of \(P^{1 \succ 2}\)); the second comes from manipulators in \(Q\); and the third is an error between \(Q\) and \(Q'\), bounded by \(\xi\). If \(\zeta = \xi < \frac{m - 3 - (5m - 3) \theta}{30m}\), standard calculations show that candidates 2 and 3 secure the highest scores and thus advance to the second round.

In the second round, we have:
\[
s_\TR^2(2, Q') - s_\TR^2(3, Q') > \left( \theta - \frac{1 - \theta}{6} - \zeta \right) + \left( \frac{1 - \theta}{2} - \zeta \right) \frac{1 - 5 \theta}{3 (1 - \theta)} - \xi.
\]
If \(\zeta = \xi < \frac{\theta}{12}\), standard calculations show that this quantity is positive, and therefore candidate~2 wins, as desired.

Overall, it thus suffices to take \(\zeta = \xi < \min\left(\frac{m - 3 - (5m - 3) \theta}{30m}, \frac{\theta}{12}\right)\).

We summarize our findings in the following theorem.

\begin{theorem}\label{thm_tr_theta_c}
	Let \( \theta_c(\TR{}, m) = \frac{m-3}{5m-3} \) with \( m \geq 3 \).
	\begin{itemize}
		\item If \( \theta < \theta_c(\TR{}, m) \), then \( \lim_{n \to \infty} \rho(\TR{}, m, n, \theta) = 1 \).
		\item If \( \theta > \theta_c(\TR{}, m) \), then \( \lim_{n \to \infty} \rho(\TR{}, m, n, \theta) = 0 \).
	\end{itemize}
\end{theorem}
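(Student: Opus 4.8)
The theorem is the summary of the regime-by-regime analysis carried out above, so the plan is simply to assemble the two cases and invoke the two neighborhood lemmas. For the supercritical regime $\theta > \theta_c(\TR{}, m)$ I would mirror the Plurality argument and appeal to Lemma~\ref{lem_not_CM_in_neighborhood}. In $\hat{P}$ candidate~1 is the Condorcet winner (it collects the whole concentrated block plus half of the random mass, so it beats every rival pairwise), hence it wins any runoff it enters; the only way to elect a challenger $c \neq 1$ is to keep candidate~1 out of the second round by pushing two other candidates past its first-round score $\theta + \frac{1-\theta}{m}$. The expected-profile computation shows the manipulator budget $\frac{1-\theta}{2}$ is strictly too small to do this once $\theta > \frac{m-3}{5m-3}$, and because every inequality involved is strict, the same deficit survives on a whole neighborhood of $\hat{P}$. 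Thus TR is not CM there, and Lemma~\ref{lem_not_CM_in_neighborhood} gives $\lim_{n\to\infty}\rho(\TR{}, m, n, \theta) = 0$.

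The subcritical regime $\theta < \theta_c(\TR{}, m)$ is where the genuine difficulty lies, and it is also where TR departs from Plurality: a successful coalition must \emph{split} its ballots between candidate~2 (to win the runoff) and candidate~3 (to keep candidate~1 out of it), so unison manipulation fails and Lemma~\ref{lem_um_in_neighborhood} is unavailable. My plan is to invoke the finer Lemma~\ref{lem_eta_stable_cm} with the explicit target already constructed: for every continuous $P$ within $\frac{\zeta}{m!}$ of $\hat{P}$, reassign the voters of $P^{2 \succ 1}$ in the proportions $\frac{2(1-2\theta)}{3(1-\theta)}$ and $\frac{1+\theta}{3(1-\theta)}$ to obtain a target $Q$, and then show that this $Q$ is $\delta$-stable-CM for a single fixed $\delta = \frac{\xi}{m!}$.

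The core of the argument, and the step I expect to be the main obstacle, is the uniform outcome-stability: I must show candidate~2 still wins at every $Q'$ with $d(Q',Q) < \delta$, with $\delta$ chosen independently of which $P$ near $\hat{P}$ we started from. I would decompose each relevant score of $Q'$ into three contributions — the sincere mass inherited from $P^{1 \succ 2}$ (within $\zeta$ of its $\hat{P}$-value), the injected manipulator mass, and a perturbation of size at most $\xi$ — and track the first-round gaps that send candidates~2 and~3 to the runoff, followed by the second-round gap $s_\TR^2(2,Q') - s_\TR^2(3,Q')$. This all closes provided $\zeta = \xi$ is taken below $\min\!\left(\frac{m-3-(5m-3)\theta}{30m}, \frac{\theta}{12}\right)$, a quantity that is strictly positive exactly when $\theta < \theta_c(\TR{}, m)$. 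Since one $\delta$ works across the whole neighborhood, the discretization margin $\frac{1}{n} < \delta$ in Lemma~\ref{lem_eta_stable_cm} absorbs the finite-size effects uniformly, yielding $\lim_{n\to\infty}\rho(\TR{}, m, n, \theta) = 1$.
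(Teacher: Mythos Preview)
Your proposal is correct and follows essentially the same approach as the paper: the supercritical case via Lemma~\ref{lem_not_CM_in_neighborhood} using the strict budget inequality, and the subcritical case via Lemma~\ref{lem_eta_stable_cm} with the explicit two-fraction target $Q$, the three-term score decomposition (sincere $\pm \zeta$, manipulator mass, perturbation $\pm \xi$), and the same uniform choice $\zeta = \xi < \min\!\left(\frac{m-3-(5m-3)\theta}{30m}, \frac{\theta}{12}\right)$. The only cosmetic slip is terminological: it is $P$ (not $Q$) that is $\delta$-stable-CM, with $Q$ being the target witnessing it.
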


Recall that for \( m = 2 \), the same conclusions hold by setting \( \theta_c(\TR{}, 2) = 0 \), since Two-Round is equivalent to Plurality in this case. For \( m = 3 \), the theorem also gives \( \theta_c(\TR{}, 3) = 0 \), which is remarkable: according to the Gibbard-Satterthwaite theorem, manipulability becomes an issue from \( m = 3 \), yet Two-Round avoids this with high probability in Perturbed Culture as soon as \( \theta > 0 \).

Since Theorems \ref{thm_plurality_theta_c} and \ref{thm_tr_theta_c} share similar structures, a natural question arises: does every voting rule \(f\) have a critical parameter \(\theta_c(f, m)\) with similar properties? The answer is no. Consider a rule~\(f\) that uses Plurality when \(n\) is even and Two-Round when \(n\) is odd.\footnote{For simplicity, this counter-example involves a non-homogeneous rule.} From Theorems \ref{thm_plurality_theta_c} and \ref{thm_tr_theta_c}, it follows that for \(\theta < \frac{m-3}{5m-3}\), the CM rate converges to 1, while for \(\theta > \frac{m-2}{3m-2}\), it converges to 0. However, for \(\theta \in \left(\frac{m-3}{5m-3}, \frac{m-2}{3m-2}\right)\), the CM rate tends to 1 for even \(n\) and to 0 for odd \(n\): overall, it does not converge.

It is still possible to define a lower critical value \(\theta_l(f, m)\) and an upper critical value \(\theta_u(f, m)\) as, respectively, the largest and smallest values in \([0, 1]\) such that:
\begin{itemize}
	\item If \( \theta < \theta_l(f, m) \), then \( \lim_{n \to \infty} \rho(f, m, n, \theta) = 1 \),
	\item If \( \theta > \theta_u(f, m) \), then \( \lim_{n \to \infty} \rho(f, m, n, \theta) = 0 \).
\end{itemize}
We can then define \(\theta_c(f, m)\) as their common value when it exists. 
With this convention, Theorems~\ref{thm_plurality_theta_c} and \ref{thm_tr_theta_c} are summarized as:
\[
\theta_c(\Plu{}, m) = \frac{m-2}{3m-2}, \quad \theta_c(\TR{}, m) = \frac{m-3}{5m-3}.
\]

\subsection{Simulations for the Two-Round System}

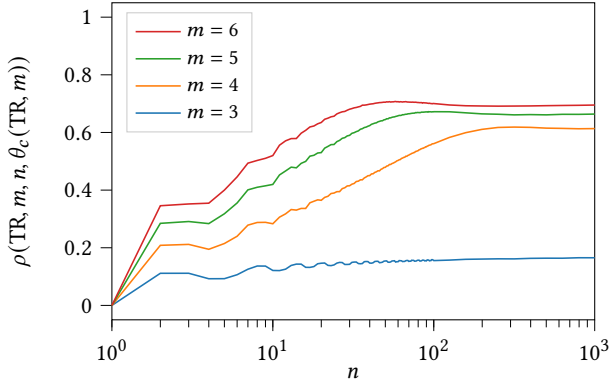
\begin{figure}
	\centering
\begin{tikzpicture}

\definecolor{crimson2143940}{RGB}{214,39,40}
\definecolor{darkgrey176}{RGB}{176,176,176}
\definecolor{darkorange25512714}{RGB}{255,127,14}
\definecolor{forestgreen4416044}{RGB}{44,160,44}
\definecolor{lightgrey204}{RGB}{204,204,204}
\definecolor{steelblue31119180}{RGB}{31,119,180}

\begin{axis}[reverse legend,
height=\axisHeight,
legend cell align={left},
legend style={font=\legendFont, fill opacity=1, draw opacity=1, text opacity=1, draw=lightgrey204},
legend pos=north west,
log basis x={10},
tick align=outside,
tick pos=left,
width=\axisWidth,
x grid style={darkgrey176},
xlabel={$n$},
xmin=1.0, xmax=1000,
xmode=log,
xtick style={color=black},
xmode=log,
y grid style={darkgrey176},
ylabel={$\rho(\text{TR}, m, n, \theta_c(\text{TR}, m))$},
ymin=-0.05, ymax=1.05,
ytick={-0.2, 0.0, 0.2, 0.4000000000000001, 0.6000000000000001, 0.8, 1.0000000000000002, 1.2000000000000002},
ytick style={color=black}
]
\addplot [semithick, steelblue31119180]
table {%
1 0
2 0.111382
3 0.111496
4 0.092498
5 0.092739
6 0.106444
7 0.12551
8 0.136654
9 0.136408
10 0.121208
11 0.120775
12 0.125715
13 0.138624
14 0.143275
15 0.14315
16 0.132097
17 0.131892
18 0.134729
19 0.144495
20 0.146355
21 0.14702
22 0.138876
23 0.137956
24 0.140428
25 0.146692
26 0.149223
27 0.148615
28 0.142047
29 0.14209
30 0.142566
31 0.14935
32 0.150334
33 0.150464
34 0.145243
35 0.14446
36 0.145232
37 0.151645
38 0.152292
39 0.151994
40 0.146935
41 0.146478
42 0.147962
43 0.152396
44 0.15299
45 0.153046
46 0.148225
47 0.148728
48 0.149943
49 0.153426
50 0.154136
51 0.154335
52 0.149853
53 0.149706
54 0.15086
55 0.15404
56 0.154471
57 0.154904
58 0.15035
59 0.151659
60 0.15157
61 0.155506
62 0.15509
63 0.154792
64 0.152148
65 0.152537
66 0.152695
67 0.155048
68 0.155871
69 0.155875
70 0.152725
71 0.153501
72 0.153377
73 0.15672
74 0.156172
75 0.156256
76 0.153438
77 0.153688
78 0.154218
79 0.156389
80 0.156666
81 0.156902
82 0.153853
83 0.154119
84 0.154761
85 0.156953
86 0.157121
87 0.15649
88 0.154399
89 0.15523
90 0.154572
91 0.157758
92 0.157688
93 0.157169
94 0.154681
95 0.155503
96 0.155381
97 0.158563
98 0.158673
99 0.157548
100 0.1555
125 0.157167
158 0.159912
199 0.160733
251 0.161608
316 0.161407
398 0.163238
501 0.163799
630 0.163697
794 0.165343
1000 0.16528
};
\addlegendentry{$m=3$}
\addplot [semithick, darkorange25512714]
table {%
1 0
2 0.20841
3 0.211673
4 0.19467
5 0.214697
6 0.239897
7 0.278462
8 0.287714
9 0.288201
10 0.283259
11 0.306167
12 0.31758
13 0.332563
14 0.330649
15 0.336434
16 0.336371
17 0.355974
18 0.36135
19 0.366224
20 0.364701
21 0.37479
22 0.378555
23 0.388155
24 0.3898
25 0.394567
26 0.394777
27 0.405169
28 0.408173
29 0.414483
30 0.415106
31 0.420543
32 0.423014
33 0.429782
34 0.431058
35 0.435651
36 0.435727
37 0.442194
38 0.445647
39 0.449896
40 0.450357
41 0.456095
42 0.456673
43 0.461428
44 0.463431
45 0.467007
46 0.466826
47 0.472191
48 0.474311
49 0.47837
50 0.479028
51 0.483386
52 0.484347
53 0.488878
54 0.488998
55 0.492173
56 0.493651
57 0.496003
58 0.498276
59 0.501514
60 0.502836
61 0.505319
62 0.506052
63 0.509934
64 0.511093
65 0.51344
66 0.513938
67 0.51749
68 0.517219
69 0.520177
70 0.521435
71 0.523657
72 0.525088
73 0.52737
74 0.528146
75 0.530026
76 0.530588
77 0.533924
78 0.534512
79 0.535866
80 0.536834
81 0.539796
82 0.540411
83 0.542731
84 0.543442
85 0.545126
86 0.545112
87 0.547029
88 0.547906
89 0.550253
90 0.550127
91 0.552524
92 0.553095
93 0.555073
94 0.555797
95 0.556559
96 0.557117
97 0.559196
98 0.56055
99 0.560808
100 0.561139
125 0.582728
158 0.599295
199 0.610663
251 0.617433
316 0.618647
398 0.617702
501 0.614986
630 0.614086
794 0.612839
1000 0.613427
};
\addlegendentry{$m=4$}
\addplot [semithick, forestgreen4416044]
table {%
1 0
2 0.284688
3 0.291012
4 0.283772
5 0.317898
6 0.355407
7 0.400832
8 0.409862
9 0.414868
10 0.419734
11 0.452607
12 0.466432
13 0.479621
14 0.477313
15 0.492804
16 0.499594
17 0.517789
18 0.520784
19 0.528702
20 0.530134
21 0.545618
22 0.551641
23 0.557786
24 0.558243
25 0.568198
26 0.572578
27 0.579758
28 0.582049
29 0.588033
30 0.590023
31 0.597067
32 0.59926
33 0.602899
34 0.60545
35 0.611141
36 0.613153
37 0.617009
38 0.618543
39 0.621634
40 0.623846
41 0.628257
42 0.63024
43 0.632361
44 0.633456
45 0.636826
46 0.638479
47 0.640212
48 0.640602
49 0.643706
50 0.644664
51 0.647539
52 0.647365
53 0.650539
54 0.650718
55 0.652572
56 0.653573
57 0.654062
58 0.655613
59 0.656564
60 0.657879
61 0.659188
62 0.658968
63 0.659665
64 0.661659
65 0.661382
66 0.662468
67 0.663219
68 0.663226
69 0.66459
70 0.664943
71 0.665073
72 0.665912
73 0.666419
74 0.666822
75 0.666995
76 0.667105
77 0.668894
78 0.668359
79 0.669372
80 0.668861
81 0.669266
82 0.668522
83 0.66946
84 0.670527
85 0.67045
86 0.670224
87 0.670844
88 0.670724
89 0.671332
90 0.671212
91 0.671613
92 0.670523
93 0.672124
94 0.671538
95 0.671822
96 0.671621
97 0.671195
98 0.671773
99 0.671458
100 0.672011
125 0.671796
158 0.66813
199 0.664515
251 0.663196
316 0.661058
398 0.661958
501 0.661095
630 0.662965
794 0.662815
1000 0.663761
};
\addlegendentry{$m=5$}
\addplot [semithick, crimson2143940]
table {%
1 0
2 0.345745
3 0.351945
4 0.354638
5 0.40003
6 0.445931
7 0.493492
8 0.502916
9 0.51031
10 0.519874
11 0.555821
12 0.57094
13 0.578691
14 0.578999
15 0.597388
16 0.60737
17 0.619077
18 0.622436
19 0.630004
20 0.634614
21 0.645774
22 0.649545
23 0.654265
24 0.656357
25 0.664373
26 0.668466
27 0.671652
28 0.672959
29 0.677199
30 0.680631
31 0.682838
32 0.68539
33 0.686684
34 0.688739
35 0.690451
36 0.694702
37 0.69471
38 0.695532
39 0.696855
40 0.698228
41 0.698975
42 0.699584
43 0.701015
44 0.70199
45 0.702691
46 0.703596
47 0.70379
48 0.704418
49 0.705125
50 0.704819
51 0.706317
52 0.705721
53 0.706785
54 0.705491
55 0.706248
56 0.706152
57 0.707018
58 0.707416
59 0.706375
60 0.70689
61 0.705704
62 0.706664
63 0.70623
64 0.705619
65 0.7064
66 0.70634
67 0.706942
68 0.705975
69 0.705492
70 0.705481
71 0.705153
72 0.705201
73 0.704929
74 0.705337
75 0.704863
76 0.704757
77 0.704972
78 0.704592
79 0.703953
80 0.703963
81 0.70354
82 0.703395
83 0.703496
84 0.703536
85 0.701903
86 0.703222
87 0.70283
88 0.702684
89 0.70246
90 0.702347
91 0.701713
92 0.701426
93 0.700846
94 0.70142
95 0.700729
96 0.700334
97 0.700056
98 0.699658
99 0.700471
100 0.700304
125 0.695388
158 0.692641
199 0.691839
251 0.691296
316 0.691618
398 0.692046
501 0.692943
630 0.693646
794 0.694206
1000 0.695115
};
\addlegendentry{$m=6$}
\end{axis}

\end{tikzpicture}
	\caption{CM rate of the Two-Round System as a function of~$n$ for different values of $m$ with $\theta = \theta_c(\TR{}, m)$. Monte Carlo simulations with 1,000,000 profiles per point.}
	\label{fig_tr_cm_rate_at_theta_c_of_n_v_n_c}
	\Description{CM rate of the Two-Round System as a function of n for different values of m with critical theta. Monte Carlo simulations with 1,000,000 profiles per point.}
\end{figure}

The Two-Round equivalent of Figure~\ref{fig_plu_cm_rate_of_theta_and_n_v} is similar, so we proceed directly to the counterpart of Figure~\ref{fig_plu_cm_rate_at_theta_c_of_n_v_n_c}: Figure~\ref{fig_tr_cm_rate_at_theta_c_of_n_v_n_c}, showing the critical CM rate as a function of \(n\) for different \(m\). We use SVVAMP 0.12.0 \cite{durand2016svvamp}, a Python package for studying the manipulability of voting rules. 
As for Plurality, the critical CM rate appears to converge to a limit in \((0,1)\) that increases with~\(m\).

\section{IRV (Instant-Runoff Voting)}\label{sec_irv}

Let us now proceed to \emph{Instant-Runoff Voting} (IRV), where the winner is determined through multiple rounds. In each round, the candidate with the lowest Plurality score is eliminated, until only one remains. Formally, let \( K(r, P) \) be the set of remaining candidates at the start of round \(r\) and \(\ell(r, P)\) be the candidate losing at round \(r\). We have:
\[
\left\{
\begin{array}{l}
	K(1, P) = \mathcal{C}(P), \\[2pt]
	\ell(r, P) = \arg \min s_{\Plu{}}(c, P_{K(r, P)}), \\[2pt]
	K(r+1, P) = K(r, P) \setminus \{ \ell(r, P) \},
\end{array}
\right.
\]
using a tie-breaking rule for elimination when necessary. The winner \(\IRV{}(P)\) is the last remaining candidate in \(K(m(P), P)\).

\subsection{Theoretical Results for IRV}

As usual, we start by examining the expected normalized profile \(\hat{P}\). Since \(\theta > 0\), candidate 1 clearly wins. Now suppose that IRV is CM in $\hat{P}$ to a target profile \( Q \), where candidate \(c \neq 1\) wins. Candidate~1 must be eliminated in some round \(r\). For conciseness, denote \(K = K(r, Q)\) and \(k = |K|\). Obviously \(c\) must belong to \(K\). The sincere voters' contribution to candidate 1’s score at this round is:
\[
s_\Plu{}(1, \hat{P}^{1 \succ c}_{K}) = s_\Plu{}(1, \hat{P}_{K}) = \theta + \frac{1-\theta}{k}.
\]
Thus, \(s_\Plu(1, Q_K) \geq \theta + \frac{1-\theta}{k}\), and since \(k \geq 2\), this is strictly greater than \(\frac{1}{k}\). Therefore, the score of candidate~1 exceeds the average score at this round, hence it cannot be minimal. This contradiction proves that IRV is not CM in $\hat{P}$.

In this reasoning, IRV’s resistance to coalitional manipulation stems from the fact that in any subset of candidates \(K\) containing candidate 1, this candidate has a Plurality score that exceeds the average score. This motivates the following definition:

\begin{definition}
	A candidate \(c\) is a \emph{Super Condorcet Winner} (SCW) in a profile \(P\) if, for every subset of candidates \(K\) containing \(c\), the following holds:
	\[
	s_{\Plu{}}(c, P_{K}) > \frac{w(P)}{|K|}.
	\]
\end{definition}

This concept strengthens the classical notion of a \emph{Condorcet Winner}, which only requires the condition to hold for subsets \(K\) of size 2. We summarize its relevance to IRV as follows:

\begin{lemma}\label{lem_scw_makes_irv_not_cm}
If \( c \) is an SCW in profile \( P \), then \(\IRV{}(P) = c\) and IRV is not CM in \( P \).
\end{lemma}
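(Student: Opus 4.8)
The plan is to prove both assertions by establishing a single robust fact: under the SCW condition, candidate~$c$ can never attain the minimal Plurality score in any surviving set, neither in $P$ nor in any reachable target profile $Q$. For the first claim, $\IRV{}(P) = c$, I would argue by induction on the IRV rounds. At the start $c$ belongs to $K(1,P) = \mathcal{C}(P)$. Suppose $c \in K := K(r,P)$ at round $r$. Since every voter has a top choice among $K$, the scores $s_{\Plu{}}(\cdot, P_K)$ sum to $w(P)$, so their average is exactly $w(P)/|K|$. The SCW condition gives $s_{\Plu{}}(c, P_K) > w(P)/|K|$, so $c$'s score strictly exceeds the average; consequently the remaining $|K|-1$ candidates have total score strictly below $w(P)(|K|-1)/|K|$, forcing at least one of them strictly below $c$. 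Thus $c$ is never the (tie-broken) minimum, survives round $r$, and by induction is the last candidate standing, i.e., the winner.

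For the second claim I would argue by contradiction. Suppose IRV is CM in $P$ via a target profile $Q$ with $\IRV{}(Q) = d \neq c$. Because $d$ wins in $Q$ it is never eliminated, so at the round $r$ where $c$ is eliminated in $Q$ both $c$ and $d$ lie in $K := K(r,Q)$. The crux is to bound $c$'s score within $K$ in $Q$ from below by its score in $P$. The key observation is that any ranking $p$ placing $c$ first among $K$ necessarily ranks $c$ above $d$ (as $d \in K$), hence prefers the old winner $c$ to the new winner $d$. By the definition of CM, such a ranking has no incentive to defect and therefore cannot lose weight: $w(p, Q) \geq w(p, P)$. Summing over exactly those rankings yields $s_{\Plu{}}(c, Q_K) \geq s_{\Plu{}}(c, P_K)$.

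Applying the SCW inequality on $P$ to the set $K$ then gives $s_{\Plu{}}(c, Q_K) \geq s_{\Plu{}}(c, P_K) > w(P)/|K| = w(Q)/|K|$, where the last equality uses that $P$ and $Q$ share the same total weight. Exactly as in the first part, $c$'s score strictly exceeds the average over $K$, so $c$ cannot be the minimum and cannot be eliminated at round $r$, contradicting our assumption. Hence no such $Q$ exists and IRV is not CM in $P$.

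The main obstacle is pinpointing the correct monotonicity argument on the manipulated profile. One must recognize that the rankings contributing to $c$'s score within $K$ are precisely those preferring $c$ to $d$, and that the CM definition forbids exactly these rankings from shedding weight. This single observation is what upgrades the static SCW inequality on $P$ into an inequality valid on every reachable target $Q$. A pleasant feature of this formulation is that it is phrased entirely in terms of ranking weights and the hypothesis $d \in K$, so it applies uniformly to discrete and continuous profiles without any separate treatment.
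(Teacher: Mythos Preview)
Your proof is correct and follows essentially the same approach as the paper. The paper sketches the argument for the expected profile $\hat{P}$ (showing that the sincere voters' contribution to candidate~1's score in the elimination round already exceeds the average, since the eventual winner $c$ must lie in $K$ and voters ranking~1 first in $K$ prefer~1 to~$c$), and then abstracts this reasoning into the SCW definition; your write-up makes the same two steps explicit, with the same pivot that rankings placing $c$ first in $K$ cannot lose weight under CM because $d\in K$.
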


The same result easily extends to several IRV variants, such as Exhaustive Ballot \cite{durand2023coalitional}, Condorcet-IRV \cite{green2016statistical, durand2016condorcet}, Benham rule, Tideman rule, Smith-IRV, and Woodall rule \cite{green2011four}. However, the converse is not true: there exists profiles without an SCW where IRV is still not CM (see \citet[Table 1.1]{durand2015towards} for an example).

Now, consider the neighborhood of the expected profile \(\hat{P}\). Since the SCW condition involves a finite number of strict inequalities that depend continuously on the profile’s coefficients, candidate~1 is an SCW not only in \(\hat{P}\) but also in its neighborhood. From here, we can follow two proof strategies that only differ in the order of their steps.

One approach is to first apply Lemma~\ref{lem_scw_makes_irv_not_cm} to deduce that IRV is not CM in this neighborhood. Using Lemma~\ref{lem_not_CM_in_neighborhood} (based on the weak law of large numbers), we then deduce \(\lim_{n\to\infty} \rho(\IRV{}, m, n, \theta) = 0\). Alternatively, we could first use the weak law of large numbers to show that candidate 1 is an SCW with high probability, then apply Lemma~\ref{lem_scw_makes_irv_not_cm} to show \(\lim_{n\to\infty} \rho(\IRV{}, m, n, \theta) = 0\).

Since this holds for any $\theta > 0$, we obtain a remarkable result:

\begin{theorem}\label{thm_irv_theta_c}
For IRV, the critical value of the concentration parameter in Perturbed Culture is  
\[
\theta_c(\IRV{}, m) = 0.
\]	
\end{theorem}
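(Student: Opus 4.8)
The plan is to assemble the ingredients established in the discussion above and then carry out the definition-chase that converts them into a statement about $\theta_c$. The substantive work is already in place: for every $\theta > 0$, candidate~1 is a Super Condorcet Winner in the expected normalized profile $\hat{P}$, because for any $K \ni 1$ with $k = |K| \geq 2$ one has $s_{\Plu{}}(1, \hat{P}_K) = \theta + \frac{1-\theta}{k}$, and the SCW inequality $\theta + \frac{1-\theta}{k} > \frac{w(\hat{P})}{k} = \frac{1}{k}$ reduces to $\theta\left(1 - \frac{1}{k}\right) > 0$, which holds for all $k \geq 2$ exactly when $\theta > 0$. Since the SCW property is a finite conjunction of strict inequalities that are affine in the profile weights, it persists on a whole neighborhood of $\hat{P}$; Lemma~\ref{lem_scw_makes_irv_not_cm} then gives that IRV is not CM there, and Lemma~\ref{lem_not_CM_in_neighborhood} upgrades this to $\lim_{n\to\infty}\rho(\IRV{}, m, n, \theta) = 0$. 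First I would record this conclusion and stress that it holds for \emph{every} $\theta > 0$, as this is the only input the theorem needs.

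The remaining step is to translate ``$\lim \rho = 0$ for all $\theta > 0$'' into ``$\theta_c(\IRV{}, m) = 0$'' by working directly from the definitions of $\theta_l$ and $\theta_u$. For the upper critical value, the implication ``if $\theta > 0$ then $\lim_{n\to\infty}\rho = 0$'' is precisely what was proved, so $0$ is an admissible threshold and, being the smallest nonnegative one, $\theta_u(\IRV{}, m) = 0$. For the lower critical value I would argue by vacuity: since the limit equals $0$ for every $\theta > 0$, there is no positive $\theta$ at which it equals $1$, so the defining implication ``if $\theta < \theta_l$ then $\lim \rho = 1$'' can be satisfied only when no positive $\theta$ lies below $\theta_l$, forcing $\theta_l(\IRV{}, m) = 0$. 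With $\theta_l = \theta_u = 0$ the common value exists and equals $0$, so $\theta_c(\IRV{}, m) = 0$.

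I do not anticipate a genuine obstacle, since the theorem essentially repackages the SCW argument already carried out; the proof is a matter of careful bookkeeping. The one point needing attention is that establishing $\theta_u = 0$ alone does not suffice: one must separately confirm $\theta_l = 0$ (and observe that it holds only vacuously). The real conceptual crux lives upstream, in the computation $\theta + \frac{1-\theta}{k} > \frac{1}{k} \iff \theta > 0$: it is precisely the absence of any positive lower bound on $\theta$ in this inequality that drives the critical threshold all the way down to zero, sharply distinguishing IRV from Plurality and the Two-Round System.
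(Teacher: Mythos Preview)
Your proposal is correct and follows essentially the same route as the paper: show candidate~1 is an SCW in $\hat{P}$ for every $\theta>0$, extend this to a neighborhood by continuity of the finitely many strict inequalities, apply Lemma~\ref{lem_scw_makes_irv_not_cm} and then Lemma~\ref{lem_not_CM_in_neighborhood} to obtain $\lim_{n\to\infty}\rho=0$, and conclude $\theta_c=0$. You are slightly more explicit than the paper in unpacking the $\theta_l/\theta_u$ definitions at the end, but the substantive argument is identical.
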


In summary, within the Perturbed Culture model, IRV has the smallest possible critical value. Even a slight concentration of preferences favoring candidate~1 is enough for IRV to become resistant to coalitional manipulation with high probability. And for the same reasons, this also holds for the IRV variants mentioned earlier.

\subsection{Simulations for IRV}

\begin{figure}
	\centering
	\input{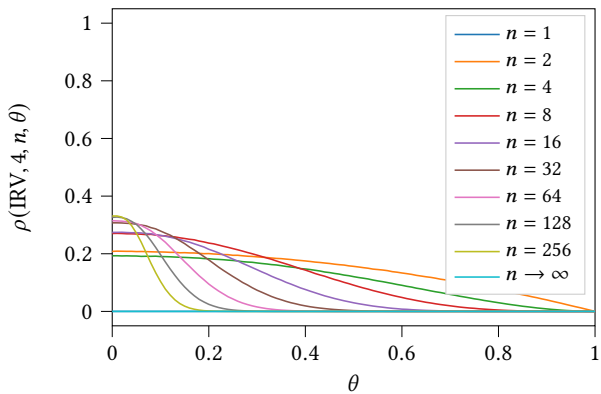}
	\caption{CM rate of IRV as a function of \( \theta \) for different values of \( n \) with \( m = 4 \). Curves for finite \( n \) are based on Monte Carlo simulations with 1,000,000 profiles per point. The limiting curve as \( n \to \infty \) follows from Theorem~\ref{thm_irv_theta_c}.}
	\label{fig_irv_cm_rate_of_theta_and_n_v}
	\Description{CM rate of IRV as a function of theta for different values of n with m = 4. Curves for finite n are based on Monte Carlo simulations with 1,000,000 profiles per point. The limiting curve as n tends to infinity follows from our theoretical results.}
\end{figure}

\begin{figure}
	\centering
\begin{tikzpicture}

\definecolor{crimson2143940}{RGB}{214,39,40}
\definecolor{darkgrey176}{RGB}{176,176,176}
\definecolor{darkorange25512714}{RGB}{255,127,14}
\definecolor{forestgreen4416044}{RGB}{44,160,44}
\definecolor{lightgrey204}{RGB}{204,204,204}
\definecolor{steelblue31119180}{RGB}{31,119,180}

\begin{axis}[reverse legend,
height=\axisHeight,
legend cell align={left},
legend style={font=\legendFont, fill opacity=1, draw opacity=1, text opacity=1, draw=lightgrey204},
legend pos=north west,
log basis x={10},
tick align=outside,
tick pos=left,
width=\axisWidth,
x grid style={darkgrey176},
xlabel={$n$},
xmin=1.0, xmax=1000,
xmode=log,
xtick style={color=black},
xmode=log,
y grid style={darkgrey176},
ylabel={$\rho(\text{IRV}, m, n, \theta_c(\text{IRV}, m))$},
ymin=-0.05, ymax=1.05,
ytick={-0.2, 0.0, 0.2, 0.4000000000000001, 0.6000000000000001, 0.8, 1.0000000000000002, 1.2000000000000002},
ytick style={color=black}
]
\addplot [semithick, steelblue31119180]
table {%
1 0
2 0.111114
3 0.11121
4 0.092343
5 0.093478
6 0.106596
7 0.123416
8 0.135911
9 0.135843
10 0.120496
11 0.120722
12 0.12596
13 0.13791
14 0.142326
15 0.142952
16 0.131458
17 0.131795
18 0.135156
19 0.14361
20 0.146303
21 0.146508
22 0.137987
23 0.138397
24 0.139302
25 0.147007
26 0.149324
27 0.149186
28 0.141765
29 0.142026
30 0.143381
31 0.149186
32 0.150832
33 0.151177
34 0.144543
35 0.145045
36 0.145094
37 0.151094
38 0.152413
39 0.15182
40 0.146951
41 0.147354
42 0.147086
43 0.152505
44 0.15327
45 0.153105
46 0.1483
47 0.148476
48 0.148265
49 0.15287
50 0.154506
51 0.154007
52 0.150058
53 0.149954
54 0.150089
55 0.153885
56 0.154942
57 0.153862
58 0.150842
59 0.151003
60 0.151155
61 0.155279
62 0.154559
63 0.155491
64 0.151916
65 0.151803
66 0.152379
67 0.155436
68 0.156133
69 0.156
70 0.152401
71 0.152234
72 0.153283
73 0.155697
74 0.156181
75 0.156163
76 0.153567
77 0.153091
78 0.153848
79 0.156394
80 0.156926
81 0.15654
82 0.154353
83 0.153978
84 0.154352
85 0.156944
86 0.156786
87 0.156687
88 0.155269
89 0.15503
90 0.155256
91 0.157988
92 0.157584
93 0.156904
94 0.15482
95 0.155039
96 0.155389
97 0.157911
98 0.15786
99 0.157651
100 0.155642
125 0.157198
158 0.15941
199 0.160735
251 0.161194
316 0.161807
398 0.162758
501 0.163715
630 0.163545
794 0.164782
1000 0.164906
};
\addlegendentry{$m=3$}
\addplot [semithick, darkorange25512714]
table {%
1 0
2 0.208057
3 0.213294
4 0.192875
5 0.202783
6 0.221184
7 0.248888
8 0.270623
9 0.271447
10 0.245401
11 0.248756
12 0.257542
13 0.280181
14 0.288546
15 0.289224
16 0.274593
17 0.277723
18 0.279433
19 0.293521
20 0.297455
21 0.297816
22 0.284377
23 0.284284
24 0.28849
25 0.30097
26 0.30321
27 0.305023
28 0.294298
29 0.296044
30 0.296547
31 0.306152
32 0.308109
33 0.306837
34 0.298153
35 0.298265
36 0.30075
37 0.309572
38 0.311314
39 0.311325
40 0.304207
41 0.306655
42 0.306097
43 0.312038
44 0.312836
45 0.313438
46 0.305603
47 0.30513
48 0.307594
49 0.314582
50 0.315544
51 0.314963
52 0.310399
53 0.311461
54 0.311265
55 0.316184
56 0.316801
57 0.317411
58 0.311622
59 0.310406
60 0.311774
61 0.316598
62 0.318061
63 0.31843
64 0.314023
65 0.315884
66 0.314426
67 0.318546
68 0.319293
69 0.320215
70 0.314839
71 0.314655
72 0.315218
73 0.319028
74 0.320283
75 0.32046
76 0.317919
77 0.31791
78 0.316977
79 0.320937
80 0.322363
81 0.322246
82 0.317127
83 0.317151
84 0.318066
85 0.321915
86 0.322908
87 0.322132
88 0.318976
89 0.320226
90 0.321043
91 0.323221
92 0.323793
93 0.323778
94 0.319759
95 0.318888
96 0.319951
97 0.324508
98 0.323025
99 0.324027
100 0.321106
125 0.324684
158 0.329101
199 0.330701
251 0.331086
316 0.333652
398 0.335096
501 0.336659
630 0.337643
794 0.339092
1000 0.339527
};
\addlegendentry{$m=4$}
\addplot [semithick, forestgreen4416044]
table {%
1 0
2 0.286119
3 0.297459
4 0.280046
5 0.298855
6 0.320359
7 0.356369
8 0.384422
9 0.388153
10 0.361115
11 0.365421
12 0.374873
13 0.401868
14 0.413206
15 0.414781
16 0.400223
17 0.405849
18 0.4069
19 0.4199
20 0.426239
21 0.427015
22 0.410618
23 0.412012
24 0.417272
25 0.432691
26 0.437823
27 0.439124
28 0.42884
29 0.432499
30 0.431399
31 0.441104
32 0.443061
33 0.444664
34 0.432769
35 0.432488
36 0.433556
37 0.444565
38 0.448005
39 0.446848
40 0.441206
41 0.444128
42 0.442242
43 0.450479
44 0.450564
45 0.451087
46 0.44303
47 0.443901
48 0.445374
49 0.453749
50 0.454166
51 0.455554
52 0.449767
53 0.451388
54 0.451104
55 0.45679
56 0.45665
57 0.457491
58 0.448581
59 0.448525
60 0.450448
61 0.458172
62 0.460002
63 0.459177
64 0.454733
65 0.457815
66 0.45583
67 0.460223
68 0.460673
69 0.460567
70 0.45579
71 0.455706
72 0.456224
73 0.461286
74 0.462071
75 0.462437
76 0.458961
77 0.461573
78 0.459085
79 0.46284
80 0.463544
81 0.462696
82 0.458144
83 0.458637
84 0.4597
85 0.464186
86 0.465402
87 0.464922
88 0.462133
89 0.463999
90 0.462061
91 0.465654
92 0.466448
93 0.466679
94 0.462385
95 0.462112
96 0.46111
97 0.466956
98 0.467666
99 0.466569
100 0.463835
125 0.469401
158 0.473359
199 0.477036
251 0.476525
316 0.479786
398 0.484022
501 0.484563
630 0.486411
794 0.487911
1000 0.488649
};
\addlegendentry{$m=5$}
\addplot [semithick, crimson2143940]
table {%
1 0
2 0.34938
3 0.365458
4 0.353967
5 0.378698
6 0.40339
7 0.444762
8 0.474923
9 0.483867
10 0.458118
11 0.464902
12 0.474213
13 0.504024
14 0.514887
15 0.518097
16 0.503728
17 0.512967
18 0.511113
19 0.526095
20 0.531411
21 0.533451
22 0.516556
23 0.520136
24 0.5251
25 0.541012
26 0.546242
27 0.547432
28 0.53796
29 0.542535
30 0.542656
31 0.552461
32 0.553341
33 0.554229
34 0.545355
35 0.543684
36 0.543984
37 0.555806
38 0.558548
39 0.558669
40 0.55305
41 0.556694
42 0.555897
43 0.562873
44 0.564173
45 0.563792
46 0.556405
47 0.55696
48 0.558297
49 0.567059
50 0.567397
51 0.568601
52 0.56328
53 0.565509
54 0.565134
55 0.569883
56 0.569742
57 0.569629
58 0.562193
59 0.56265
60 0.562402
61 0.570571
62 0.572667
63 0.572583
64 0.569226
65 0.571244
66 0.569959
67 0.575299
68 0.575203
69 0.574857
70 0.570053
71 0.570449
72 0.570851
73 0.576761
74 0.576562
75 0.577842
76 0.572945
77 0.575447
78 0.575234
79 0.577691
80 0.578109
81 0.577396
82 0.573578
83 0.572907
84 0.573933
85 0.57942
86 0.580869
87 0.580487
88 0.577137
89 0.578998
90 0.57693
91 0.581984
92 0.581488
93 0.581243
94 0.577662
95 0.577685
96 0.578056
97 0.581455
98 0.580792
99 0.582924
100 0.580187
125 0.586104
158 0.590754
199 0.593454
251 0.594431
316 0.599431
398 0.602193
501 0.603499
630 0.604748
794 0.607355
1000 0.609584
};
\addlegendentry{$m=6$}
\end{axis}

\end{tikzpicture}
	\caption{CM rate of IRV as a function of $n$ for different values of $m$ with $\theta = \theta_c(\IRV{}, m) = 0$ (Impartial Culture). Monte Carlo simulations with 1,000,000 profiles per point.}
	\label{fig_irv_cm_rate_at_theta_c_of_n_v_n_c}
	\Description{CM rate of IRV as a function of n for different values of m with critical theta, which is zero in that case (Impartial Culture). Monte Carlo simulations with 1,000,000 profiles per point.}
\end{figure}
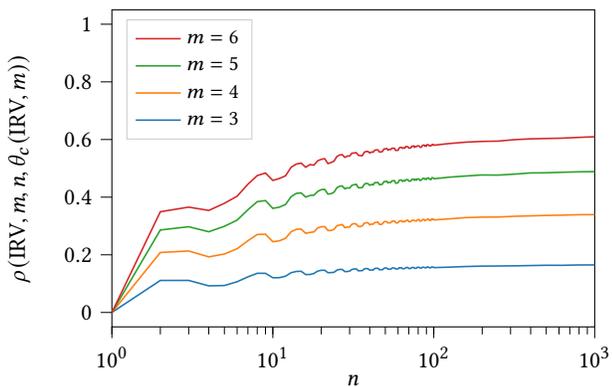

For IRV, as for Two-Round, our simulations for finite \(n\) are carried out using SVVAMP. Similar to Figure \ref{fig_plu_cm_rate_of_theta_and_n_v} for Plurality, Figure \ref{fig_irv_cm_rate_of_theta_and_n_v} shows the CM rate as a function of \(\theta\) for different \(n\). For large \(n\), the curve takes on a sigmoidal shape that converges to the theoretical curve from Theorem~\ref{thm_irv_theta_c}. The behavior near \(\theta = 0\) suggests that in the Impartial Culture model, the CM rate converges to a limit within \((0, 1)\), as proven for \(m=3\) by \citet{lepelley1999kimroush}, conjectured in the general case by \citet[Conjecture 7.8]{durand2015towards}, and further supported by Figure \ref{fig_irv_cm_rate_at_theta_c_of_n_v_n_c}. This second figure also indicates that, as with Plurality and Two-Round, the limit CM rate in the critical regime appears to increase with the number of candidates~\(m\).

\subsection{Empirical Results for IRV}

In the Perturbed Culture model, the presence of an SCW explains IRV's resistance to coalitional manipulation. However, as noted earlier, this is not a necessary condition: IRV can be non-manipulable in profiles without an SCW. This raises the question of whether the presence of an SCW often accounts for IRV's non-manipulability in realistic scenarios.

\begin{table}[t]
	\caption{Empirical study of Super Condorcet Winners (SCW) and IRV in two datasets, with the presence of a Condorcet Winner (CW) included as a reference.}
	\label{tab_empirical_scw_and_irv}
	\begin{tabular}{lrr}\toprule
		 & \textit{Netflix dataset} & \textit{FairVote dataset} \\ \midrule
		Profiles & 11,215 & 10,044 \\
		--- with a CW & 99.30\% & 99.98\% \\
		--- where IRV is not CM (a) & 95.87\% & 96.30\% \\
		--- with an SCW (b) & 94.05\% & 96.20\% \\
		Ratio (b) / (a) & 98\% & > 99\%${}^*$ \\
	\end{tabular}

	\bigskip\raggedright{\footnotesize${}^*$ We omit the next digit of the raw result (99.9\%), not significant given the sample size.} 
\end{table}

To investigate this, Table \ref{tab_empirical_scw_and_irv} analyzes the Netflix and FairVote datasets \cite{durand2023coalitional}, which respectively contain 11,215 profiles derived from slight perturbations of 2,243 empirical profiles and 10,044 profiles based on 162 empirical profiles.
It provides two key insights. First, an SCW is very common in real-world datasets, here appearing in 94\% or 96\% of profiles. Second, in most cases where IRV resists CM, this can be explained by the presence of an SCW---98\% in the Netflix dataset and over 99\% in the FairVote dataset. This confirms that SCWs are a crucial factor in IRV's resilience to manipulation.

The frequent appearance of SCWs may seem surprising, but it becomes intuitive when revisiting the definition. For a candidate~\(c\) and a subset \(K\) of candidates that includes \(c\), if preferences were perfectly balanced, we would expect \(s_\Plu{}(c, P_K) = \frac{w(P)}{|K|}\). The condition for being an SCW is simply to exceed this average. Therefore, even a slight bias in favor of \(c\) makes it likely for \(c\) to be an SCW.

\section{Convergence Speed}\label{sec_convergence_speed}

We will now study the convergence speed, to assess how fast the results found for $n \to \infty$ become relevant for finite values of $n$.

\subsection{Theoretical Bound}\label{sec_speed_theo}

We will first show that in the non-critical regime, the CM rate converges exponentially fast as \( n \to \infty \). Next, we will bound this speed of convergence depending on the parameter \(\theta\).

As an example, consider Lemma \ref{lem_not_CM_in_neighborhood}, where we assume the existence of a neighborhood where the rule \( f \) is not CM. By definition, there exists \(\epsilon > 0\) such that this neighborhood contains an open ball of diameter \(\epsilon\) for the infinity norm. Our approach is to apply Hoeffding's inequality \cite{hoeffding1994probability} to bound the probability that the normalized random profile \(\bar{P}\) falls outside this ball. Since Hoeffding's inequality applies to scalar random variables, we use the union bound to extend it to the weight vector representing a voting profile.
Formally, denoting by \(\mathbb{P}\) the probability:
\begin{align}
	\rho(f, m, n, \theta)
	&\leq \mathbb{P}( d(\bar{P}, \hat{P}) \geq \epsilon ), \nonumber\\
	&\leq \textstyle\sum_{p} \mathbb{P}( | w(p, \bar{P}) - w(p, \hat{P}) | \geq \epsilon ) \quad \text{(union bound)}, \nonumber\\
	&\leq 2 m! e^{- 2 \epsilon^2 n} \quad \text{(Hoeffding's inequality)}.\label{eq_bound_obtained_from_hoeffding}
\end{align}
Thus, the convergence is exponentially fast in \(n\), and we can quantify the rate if the size of the neighborhood is known. By the same reasoning, similar results hold for Lemmas \ref{lem_um_in_neighborhood} and \ref{lem_eta_stable_cm}, and consequently for Theorems \ref{thm_plurality_theta_c}, \ref{thm_tr_theta_c}, and \ref{thm_irv_theta_c}. In the literature on phase transitions, this is known as \emph{sharp} transitions, meaning that the limiting curve quickly approximates the behavior even for finite \(n\).

Let us now bound the speed of convergence more precisely. For example, consider Plurality in the supercritical regime. If a profile~\(P\) lies within an open ball of radius \(\epsilon\) centered at \(\hat{P}\), the score of candidate 1 is bounded from below: \( s_\Plu{}(1, P) > \theta + \frac{1 - \theta}{m} - m! \epsilon \), and the number of manipulators is bounded from above: \( w(P^{2 \succ 1}) < \frac{1 - \theta}{2} + m! \epsilon \). To ensure \( s_\Plu{}(1, P) > w(P^{2 \succ 1}) \), we set \(\epsilon = \frac{(3 m - 2) \theta - (m - 2)}{2 m!}\), which can be rewritten as \(\epsilon = \big(\theta - \theta_c(\Plu{}, m)\big) \frac{3 m - 2}{2 m!}\). Using our bound \eqref{eq_bound_obtained_from_hoeffding}, there exists a coefficient \(A^+(\Plu{}, m)\)---which we could explicitly compute---such that:
\[
\rho(\Plu{}, m, n, \theta) = O\left(e^{- A^+(\Plu{}, m) (\theta - \theta_c(\Plu{}, m))^2}\right).
\]

This reasoning for Plurality in the supercritical regime generalizes to the subcritical regime, with a coefficient \(A^-(\Plu{}, m)\), and to the other voting rules in this paper: since all relevant quantities (scores, numbers of manipulators) are linear in the profile weights, we can take a value of \(\epsilon\) that depends linearly on \(\theta - \theta_c\), leading to a term in $(\theta - \theta_c)^2$ via Hoeffding’s inequality. Thus, we obtain:
\begin{itemize}
	\item Supercritical regime: \(\rho = O\big(e^{-A^+(f, m) (\theta - \theta_c)^2 n}\big)\),
	\item Subcritical regime: \(\rho = 1 - O\big(e^{-A^-(f, m) (\theta_c - \theta)^2 n}\big)\),
\end{itemize}
where \(\rho = \rho(f, m, n, \theta)\) and \(\theta_c = \theta_c(f, m)\).

\subsection{Simulation Study of the Convergence Speed}

\begin{figure}[!t]
	\centering
\begin{tikzpicture}

\definecolor{crimson2143940}{RGB}{214,39,40}
\definecolor{darkgrey176}{RGB}{176,176,176}
\definecolor{darkorange25512714}{RGB}{255,127,14}
\definecolor{darkturquoise23190207}{RGB}{23,190,207}
\definecolor{forestgreen4416044}{RGB}{44,160,44}
\definecolor{goldenrod18818934}{RGB}{188,189,34}
\definecolor{grey127}{RGB}{127,127,127}
\definecolor{lightgrey204}{RGB}{204,204,204}
\definecolor{mediumpurple148103189}{RGB}{148,103,189}
\definecolor{orchid227119194}{RGB}{227,119,194}
\definecolor{sienna1408675}{RGB}{140,86,75}
\definecolor{steelblue31119180}{RGB}{31,119,180}

\begin{axis}[
height=\axisHeight,
legend cell align={left},
legend style={font=\legendFont, 
  fill opacity=1,
  draw opacity=1,
  text opacity=1,
  at={(0.97,0.03)},
  anchor=south east,
  draw=lightgrey204
},
log basis y={10},
tick align=outside,
tick pos=left,
width=\axisWidth,
x grid style={darkgrey176},
xlabel={$n$},
xmin=-49.1, xmax=1075.1,
xtick style={color=black},
y grid style={darkgrey176},
ylabel={$\rho(\text{Plu}, 4, n, \theta)$},
ymin=0.001, ymax=1,
ymode=log,
ymode=log
]
\addplot [semithick, steelblue31119180]
table {%
2 0.131701
3 0.266031
4 0.27462
5 0.362872
6 0.376014
7 0.419457
8 0.438869
10 0.478969
11 0.492667
13 0.516011
16 0.538624
19 0.554078
23 0.566968
27 0.574274
32 0.579256
38 0.583498
45 0.586284
54 0.587797
64 0.589441
76 0.588819
91 0.5893
108 0.588458
128 0.588256
152 0.587324
181 0.585298
215 0.583092
256 0.580998
304 0.577356
362 0.573796
431 0.570351
512 0.565967
609 0.561397
724 0.555942
861 0.549114
1024 0.542335
};
\addlegendentry{$\theta = 0.205$}
\addplot [semithick, darkorange25512714]
table {%
2 0.130418
3 0.263371
4 0.272599
5 0.362799
6 0.372854
7 0.417658
8 0.435832
10 0.474556
11 0.488771
13 0.509698
16 0.532097
19 0.547084
23 0.559076
27 0.565467
32 0.569994
38 0.573002
45 0.575979
54 0.574659
64 0.57471
76 0.574206
91 0.572571
108 0.570327
128 0.568874
152 0.566056
181 0.563114
215 0.558122
256 0.552644
304 0.547537
362 0.541585
431 0.534873
512 0.526349
609 0.518822
724 0.508606
861 0.498669
1024 0.487185
};
\addlegendentry{$\theta = 0.208$}
\addplot [semithick, forestgreen4416044]
table {%
2 0.129812
3 0.263442
4 0.271005
5 0.360371
6 0.370705
7 0.414022
8 0.430613
10 0.4694
11 0.482612
13 0.503237
16 0.524217
19 0.537719
23 0.548307
27 0.553741
32 0.557389
38 0.559037
45 0.559214
54 0.558466
64 0.555958
76 0.553984
91 0.550347
108 0.54687
128 0.541516
152 0.537118
181 0.531443
215 0.523957
256 0.515814
304 0.507013
362 0.496766
431 0.486919
512 0.474432
609 0.461423
724 0.446974
861 0.431321
1024 0.414239
};
\addlegendentry{$\theta = 0.212$}
\addplot [semithick, crimson2143940]
table {%
2 0.12674
3 0.260536
4 0.266004
5 0.354717
6 0.362208
7 0.405543
8 0.422232
10 0.457874
11 0.469916
13 0.488742
16 0.50683
19 0.518321
23 0.526389
27 0.529547
32 0.530177
38 0.530326
45 0.527985
54 0.523548
64 0.518466
76 0.513757
91 0.506478
108 0.498291
128 0.490217
152 0.479468
181 0.468381
215 0.455847
256 0.442581
304 0.427126
362 0.409835
431 0.39146
512 0.371911
609 0.350709
724 0.328211
861 0.304744
1024 0.277009
};
\addlegendentry{$\theta = 0.220$}
\addplot [semithick, mediumpurple148103189]
table {%
2 0.123099
3 0.256436
4 0.259058
5 0.347703
6 0.352206
7 0.395194
8 0.407716
10 0.441649
11 0.452656
13 0.469375
16 0.483858
19 0.492745
23 0.496859
27 0.496414
32 0.494831
38 0.490839
45 0.484734
54 0.477031
64 0.468211
76 0.457868
91 0.44556
108 0.432609
128 0.417598
152 0.401881
181 0.385377
215 0.365718
256 0.344698
304 0.323199
362 0.298701
431 0.272855
512 0.246154
609 0.219489
724 0.191708
861 0.163952
1024 0.136786
};
\addlegendentry{$\theta = 0.231$}
\addplot [semithick, sienna1408675]
table {%
2 0.117343
3 0.251495
4 0.247903
5 0.33484
6 0.335673
7 0.37647
8 0.386446
10 0.414926
11 0.424
13 0.436973
16 0.446076
19 0.44931
23 0.447947
27 0.444304
32 0.436906
38 0.427104
45 0.416178
54 0.401161
64 0.387059
76 0.369452
91 0.349445
108 0.330027
128 0.308724
152 0.284739
181 0.261126
215 0.234512
256 0.207565
304 0.180434
362 0.15373
431 0.127821
512 0.102502
609 0.079175
724 0.058407
861 0.041131
1024 0.027312
};
\addlegendentry{$\theta = 0.249$}
\addplot [semithick, orchid227119194]
table {%
2 0.108881
3 0.240866
4 0.230356
5 0.315647
6 0.310016
7 0.348764
8 0.351482
10 0.372689
11 0.379802
13 0.385305
16 0.387061
19 0.382971
23 0.374408
27 0.363238
32 0.348538
38 0.33287
45 0.314633
54 0.292661
64 0.270027
76 0.245923
91 0.221544
108 0.195472
128 0.16969
152 0.14377
181 0.118662
215 0.094583
256 0.072527
304 0.053788
362 0.037249
431 0.024553
512 0.014753
609 0.008286
724 0.004204
861 0.00176
1024 0.000722
};
\addlegendentry{$\theta = 0.277$}
\addplot [semithick, grey127]
table {%
2 0.095338
3 0.223213
4 0.202745
5 0.28375
6 0.267497
7 0.302349
8 0.297343
10 0.307732
11 0.310316
13 0.3066
16 0.29819
19 0.284955
23 0.267288
27 0.248959
32 0.227717
38 0.205164
45 0.181548
54 0.15543
64 0.131722
76 0.108502
91 0.084555
108 0.065189
128 0.047784
152 0.033228
181 0.021326
215 0.012764
256 0.006881
304 0.003487
362 0.00142
431 0.000496
512 0.00018
609 5e-05
724 8e-06
};
\addlegendentry{$\theta = 0.321$}
\addplot [semithick, goldenrod18818934]
table {%
2 0.07724
3 0.195688
4 0.16043
5 0.231137
6 0.202854
7 0.231288
8 0.215228
10 0.213135
11 0.210816
13 0.198426
16 0.177949
19 0.158912
23 0.135532
27 0.115865
32 0.095551
38 0.074832
45 0.057467
54 0.040325
64 0.02801
76 0.017822
91 0.010362
108 0.005481
128 0.002675
152 0.001156
181 0.000386
215 0.000127
256 3.2e-05
304 6e-06
};
\addlegendentry{$\theta = 0.390$}
\addplot [semithick, darkturquoise23190207]
table {%
2 0.051992
3 0.147091
4 0.100519
5 0.151947
6 0.113939
7 0.132778
8 0.109474
10 0.097233
11 0.092703
13 0.07722
16 0.057985
19 0.043688
23 0.030015
27 0.020262
32 0.01288
38 0.007337
45 0.00369
54 0.001613
64 0.00067
76 0.000206
91 5.5e-05
108 7e-06
128 3e-06
152 1e-06
};
\addlegendentry{$\theta = 0.500$}
\end{axis}

\end{tikzpicture}
	\caption{CM rate of Plurality as a function of $n$ for different supercritical values of $\theta$ with $m = 4$. Monte Carlo simulations with 1,000,000 profiles per point.}
	\label{fig_plu_cm_rate_of_n_v_and_theta_supercritical}
	\Description{CM rate of Plurality as a function of n for different supercritical values of theta with m = 4. Monte Carlo simulations with 1,000,000 profiles per point.}
\end{figure}
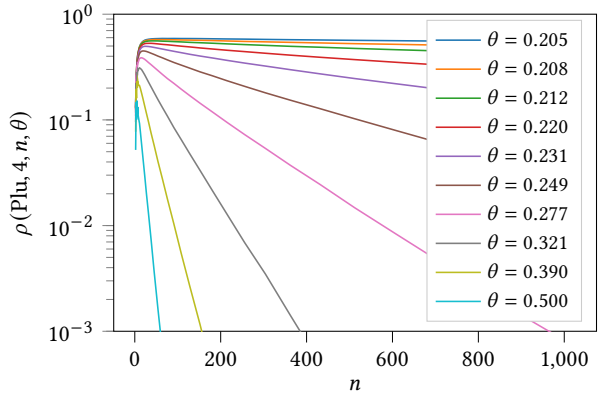

Figure \ref{fig_plu_cm_rate_of_n_v_and_theta_supercritical} shows the CM rate of Plurality as a function of \(n\) for various values of \(\theta\) (whereas Figure \ref{fig_plu_cm_rate_of_theta_and_n_v} does the reverse). Each curve has an oblique asymptote on a semi-log scale, indicating not only that it is bounded by a decreasing exponential (as predicted by theory), but that it follows the form \(\rho \sim_{n \to \infty} B(m, \theta) e^{- C(m, \theta) n}\), with \(m = 4\) here. This figure also allows us to measure the slopes of the asymptotes, providing the values of \(C(4, \theta)\) for each \(\theta\).

\begin{figure}[!t]
	\centering
\begin{tikzpicture}

\definecolor{darkgrey176}{RGB}{176,176,176}
\definecolor{darkorange25512714}{RGB}{255,127,14}
\definecolor{forestgreen4416044}{RGB}{44,160,44}
\definecolor{lightgrey204}{RGB}{204,204,204}
\definecolor{steelblue31119180}{RGB}{31,119,180}

\begin{axis}[reverse legend,
height=\axisHeight,
legend cell align={left},
legend style={font=\legendFont, 
  fill opacity=1,
  draw opacity=1,
  text opacity=1,
  at={(0.03,0.97)},
  anchor=north west,
  draw=lightgrey204
},
log basis x={10},
log basis y={10},
tick align=outside,
tick pos=left,
width=\axisWidth,
x grid style={darkgrey176},
xtick={0.005, 0.020, 0.077, 0.300},
xticklabels={0.005, 0.020, 0.077, 0.300},
xlabel={$\theta - \theta_c(\text{Plu}, 4)$},
xmin=0.00407438854172251, xmax=0.368153401336105,
xmode=log,
xtick style={color=black},
xmode=log,
y grid style={darkgrey176},
ylabel={Asymptotic negative slope $C(4, \theta)$},
ymin=1.68890102412472e-05, ymax=0.171212070097638,
ymode=log,
ymode=log
]
\path [draw=steelblue31119180, semithick]
(axis cs:0.00499999999999998,7.10675e-05)
--(axis cs:0.00499999999999998,9.55415e-05);

\path [draw=steelblue31119180, semithick]
(axis cs:0.00799999999999998,0.000127902)
--(axis cs:0.00799999999999998,0.000174132);

\path [draw=steelblue31119180, semithick]
(axis cs:0.012,0.000232843)
--(axis cs:0.012,0.000297139);

\path [draw=steelblue31119180, semithick]
(axis cs:0.02,0.000529299)
--(axis cs:0.02,0.000621501);

\path [draw=steelblue31119180, semithick]
(axis cs:0.031,0.001064335)
--(axis cs:0.031,0.001230741);

\path [draw=steelblue31119180, semithick]
(axis cs:0.049,0.002453784)
--(axis cs:0.049,0.00271245);

\path [draw=steelblue31119180, semithick]
(axis cs:0.077,0.005118355)
--(axis cs:0.077,0.007139879);

\path [draw=steelblue31119180, semithick]
(axis cs:0.121,0.011569868)
--(axis cs:0.121,0.018366818);

\path [draw=steelblue31119180, semithick]
(axis cs:0.19,0.022903157)
--(axis cs:0.19,0.049082839);

\path [draw=steelblue31119180, semithick]
(axis cs:0.3,0.074879445)
--(axis cs:0.3,0.112576215);

\addplot [semithick, darkorange25512714]
table {%
0.00499999999999998 8.35037478662248e-05
0.00799999999999998 0.000151327461274696
0.012 0.00025273953897666
0.02 0.000482293997918368
0.031 0.000839617448725901
0.049 0.00149832820172752
0.077 0.00265411465527282
0.121 0.00470145632659927
0.19 0.00832017034304093
0.3 0.0148274783959487
};
\addlegendentry{$y = 0.068 x^{1.265}$}
\addplot [semithick, forestgreen4416044]
table {%
0.00499999999999998 2.56857312648515e-05
0.00799999999999998 6.57245739891858e-05
0.012 0.00014782034333156
0.02 0.000410402367209771
0.031 0.000985559666177284
0.049 0.00246123373864955
0.077 0.00607499402911179
0.121 0.0149947369379039
0.19 0.036955522255088
0.3 0.0920908081101323
};
\addlegendentry{$y = 1.022 x^{1.999}$}
\addplot [semithick, steelblue31119180, mark=*, mark size=1, mark options={solid}, only marks]
table {%
0.00499999999999998 8.33045e-05
0.00799999999999998 0.000151017
0.012 0.000264991
0.02 0.0005754
0.031 0.001147538
0.049 0.002583117
0.077 0.006129117
0.121 0.014968343
0.19 0.035992998
0.3 0.09372783
};
\addlegendentry{Measured slope}
\end{axis}

\end{tikzpicture}
	\caption{Asymptotic negative slope \(C(4, \theta)\) from Figure~\ref{fig_plu_cm_rate_of_n_v_and_theta_supercritical}, plotted as a function of \(\theta - \theta_c(\Plu{}, 4)\). The vertical blue lines represent error margins.\protect\footnotemark}
	\label{fig_slopes}
	\Description{Asymptotic negative slope C(4, theta) measured in previous figure as a function of theta minus the critical threshold. The vertical blue lines represent error margins.}
\end{figure}
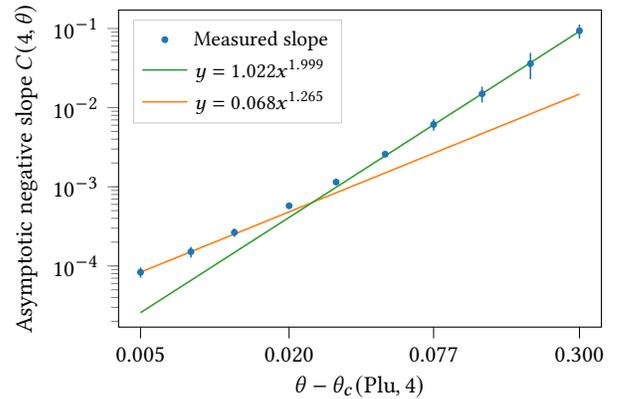
\footnotetext{The computation of the error margins is detailed in the code repository \url{https://github.com/francois-durand/irv-cm-aamas-2025}.}

To analyze how convergence speed varies with \(\theta\), Figure \ref{fig_slopes} plots the measured asymptotic slopes against \(\theta - \theta_c\) in log-log scale (the values of $\theta$ were specifically chosen to be evenly spaced in that figure). When $\theta$ is far from $\theta_c$, the dependency is in $(\theta - \theta_c)^2$, in line with the upper bound found previously. However, close to $\theta_c$, the dependency seems to involve a smaller exponent (estimated at $1.265$). In the terminology of phase transition, this is called the \emph{critical exponent} of the convergence speed.

We repeated this for \(m \in \{5, 6, 7\}\), the subcritical regime, and the other voting rules, with similar results but various critical exponents. 
This suggests a long-range dependency in $|\theta - \theta_c|^2$ but smaller critical exponents near the critical regime. This intriguing behavior will deserve further theoretical investigation.

\section{Future Work}\label{sec_conclusion}

A natural direction for future work is to compute the critical parameter \(\theta_c\) for other voting systems. Another key area of research would be a deeper analysis of the critical regime, including the calculation of the limiting CM rate at \(\theta = \theta_c\) and the asymptotic behavior of the slope of the sigmoid $\rho(\theta)$ at \( \theta = \theta_c \), which is linked to a finer analysis of the convergence speed in the non-critical regime. Expanding the study to other models, such as Mallows, is also promising. Preliminary analysis shows that the qualitative results observed in this paper, particularly the key finding that IRV's limit CM rate drops to zero with even slight concentration of preferences, also hold true under the Mallows model.




\begin{acks}
This work has been carried out at LINCS (\url{www.lincs.fr}).
\end{acks}

\newpage

\balance



\bibliographystyle{ACM-Reference-Format} 
\bibliography{Why_IRV_is_so_resilient_to_coalitional_manipulation}


\end{document}